\documentclass[sigconf]{acmart}

\AtBeginDocument{%
  \providecommand\BibTeX{{%
    \normalfont B\kern-0.5em{\scshape i\kern-0.25em b}\kern-0.8em\TeX}}}

\setcopyright{acmlicensed}
\copyrightyear{2024}
\acmYear{2024}
\copyrightyear{2024}
\acmYear{2024}
\setcopyright{acmlicensed}
\acmConference[KDD '24] {Proceedings of the 30th ACM SIGKDD Conference on Knowledge Discovery and Data Mining }{August 25--29, 2024}{Barcelona, Spain.}
\acmBooktitle{Proceedings of the 30th ACM SIGKDD Conference on Knowledge Discovery and Data Mining (KDD '24), August 25--29, 2024, Barcelona, Spain}
\acmISBN{979-8-4007-0490-1/24/08}
\acmDOI{10.1145/3637528.3671856}


\usepackage[utf8]{inputenc} 
\usepackage[T1]{fontenc}    
\usepackage{hyperref}       
\usepackage{url}            
\usepackage{booktabs}       
\usepackage{amsmath}
\usepackage{amsfonts}       
\usepackage{nicefrac}       
\usepackage{microtype}      
\usepackage{xcolor}         
\usepackage{ulem}
\usepackage{graphicx}
\usepackage{amsthm}

\usepackage{amssymb}
\usepackage{siunitx}
\usepackage{stfloats}
\usepackage{subfigure}
\usepackage{hyperref}
\usepackage{enumerate}
\usepackage{enumitem}

\usepackage{bm}
\usepackage{nicefrac}
\usepackage{booktabs}
\usepackage{array}
\usepackage{multirow}
\usepackage{threeparttable}
\usepackage{makecell}
\usepackage{tabularx}
\usepackage[procnumbered,ruled,vlined,linesnumbered]{algorithm2e}

\def\eps{\epsilon}

\def\trace#1{\mathrm{Tr} \left(#1 \right)}

\newtheorem{problem}{Problem}
\newtheorem{theorem}{Theorem}[section]

\newtheorem{lemma}[theorem]{Lemma}

\newtheorem{definition}[theorem]{Definition}

\def\calG{\mathcal{G}}

\def\calN{\mathcal{N}}

\newcommand{\removelatexerror}{\let\@latex@error\@gobble}

\newcommand\LL{\bm{\mathit{L}}}

\newcommand{\samv}{{\widetilde{V}}}

\def\trace#1{\mathrm{Tr} \left(#1 \right)}

\newcommand{\STW}[0]{\texttt{STW}\xspace}
\newcommand{\SWF}[0]{\texttt{SWF}\xspace}
\newcommand{\SNB}[0]{\texttt{SNB}\xspace}
\newcommand{\SNBp}[0]{\texttt{SNB+}\xspace}

\newcommand{\one}{\mathbf{1}}

\newcommand\xx{\boldsymbol{\mathit{x}}}

\newcommand\bb{\boldsymbol{\mathit{b}}}

\newcommand\dd{\boldsymbol{\mathit{d}}}
\newcommand\ee{\boldsymbol{\mathit{e}}}

\newcommand\hh{\boldsymbol{\mathit{h}}}
\newcommand\ww{\boldsymbol{\mathit{w}}}

\renewcommand\AA{\boldsymbol{\mathit{A}}}

\newcommand\JJ{\boldsymbol{\mathit{J}}}
\newcommand\DD{\boldsymbol{\mathit{D}}}

\newcommand\PP{\boldsymbol{\mathit{P}}}
\newcommand\MM{\boldsymbol{\mathit{M}}}

\newcommand\QQ{\boldsymbol{\mathit{Q}}}

\newcommand\II{\boldsymbol{\mathit{I}}}
\newcommand\uu{\boldsymbol{\mathit{u}}}


\DontPrintSemicolon
\SetKw{KwAnd}{and}
\SetFuncSty{textsc}
\SetKwInOut{Input}{Input\ \ \ \ }
\SetKwInOut{Output}{Output}
\begin{document}

\title{Fast Query of Biharmonic Distance in Networks}

\author{Changan Liu}
\affiliation{%
  \institution{Fudan University}
  \city{Shanghai}
  \country{China}
  \postcode{200433}
}
\email{19110240031@fudan.edu.cn}

\author{Ahad N. Zehmakan}
\affiliation{%
	\institution{Australian
National University}
	\city{Canberra}
	\country{Australia}}
\email{ahadn.zehmakan@anu.edu.au
}

\author{Zhongzhi Zhang\footnotemark}
\affiliation{
 \institution{Fudan University}
 \city{Shanghai}
 \country{China}
}
\email{zhangzz@fudan.edu.cn}

\renewcommand{\shortauthors}{Changan Liu, Ahad N. Zehmakan, \& Zhongzhi Zhang}

\begin{abstract}

The \textit{biharmonic distance} (BD) is a fundamental metric that measures the distance of two nodes in a graph. It has found applications in network coherence, machine learning, and computational graphics, among others. In spite of BD's importance, efficient algorithms for the exact computation or approximation of this metric on large graphs remain notably absent. In this work, we provide several algorithms to estimate BD, building on a novel formulation of this metric. These algorithms enjoy locality property (that is, they only read a small portion of the input graph) and at the same time possess provable performance guarantees. In particular, our main algorithms approximate the BD between any node pair with an arbitrarily small additive error $\eps$ in time $O(\frac{1}{\eps^2}\text{poly}(\log\frac{n}{\eps} ))$. Furthermore, we perform an extensive empirical study on several benchmark networks, validating the performance and accuracy of our algorithms.
\end{abstract}

\begin{CCSXML}
  <ccs2012>
     <concept>
         <concept_id>10003033.10003068</concept_id>
         <concept_desc>Networks~Network algorithms</concept_desc>
         <concept_significance>500</concept_significance>
         </concept>
     <concept>
         <concept_id>10003752.10010061</concept_id>
         <concept_desc>Theory of computation~Randomness, geometry and discrete structures</concept_desc>
         <concept_significance>500</concept_significance>
         </concept>
     <concept>
         <concept_id>10002951.10003227.10003351</concept_id>
         <concept_desc>Information systems~Data mining</concept_desc>
         <concept_significance>500</concept_significance>
         </concept>
   </ccs2012>
\end{CCSXML}

\ccsdesc[500]{Networks~Network algorithms}
\ccsdesc[500]{Theory of computation~Randomness, geometry and discrete structures}
\ccsdesc[500]{Information systems~Data mining}

\keywords{Graph algorithms, biharmonic distance, random walk, distance measure, approximation algorithms}

\maketitle

\renewcommand{\thefootnote}{*}
\footnotetext[1]{Corresponding author. Changan Liu and Zhongzhi Zhang are with  Shanghai Key Laboratory of Intelligent Information Processing, School of Computer Science, Fudan University, Shanghai 200433, China. This work was supported by the National Natural Science Foundation of China (Nos. U20B2051,  62372112, and 61872093).}

\section{Introduction}
Graph distance metrics have long been a cornerstone in the realm of network analysis, with a plethora of measures being introduced over the years, finding applications across diverse domains~\cite{simibipar,lu2011link,shimada2016graph,spdwww20,8970897,sdwww19,prunwww19}. A prominent example is the geodesic distance~\cite{newman2018networks}, which, intuitively speaking, gauges distance based on the shortest path between node pairs. Another notable metric is the resistance distance, rooted in electrical circuit theory, which has found significance in areas ranging from circuit theory \cite{doyle1984random} and chemistry \cite{klein1993resistance,peng2017kirchhoff} to combinatorial matrix \cite{bapat2010graphs,yang2013recursion} and spectral graph theory \cite{chen2007resistance}.

One of the most fundamental and important distance measures which has gained substantial attention, due to its various applications~\cite{ACC2018BHD,Yi2022BiharmonicDP,verma2017hunt,kreuzer2021rethinking,black2023understanding,YiSh2018,Tyloo2017RobustnessOS,fan2020spectral,bd14,XuWuZhZhKaCh22}, is the \textit{biharmonic distance} (BD). Introduced by Lipman et al.~\cite{lipman2010biharmonic}, BD was originally conceived as a measure to gauge distances on curved surfaces, a pivotal challenge in computer graphics and geometric processing.
The connection between BD and other measures and its different variants have been studied in the prior work. For example,
Yi et al.~\cite{YiSh2018} established a connection between BD and edge centrality. Wei et al.~\cite{wei2021biharmonic} introduced the notation of biharmonic graph distance and explored its relationship with the Kirchhoff index~\cite{klein1993resistance}.

The BD offers a unique blend of both local and global graph structure properties, making it superior to the geodesic and resistance distance in various applications. The versatility of BD is evident, for example, in its application to measure the robustness of second-order noisy consensus problems~\cite{bamieh2012coherence,Yi2022BiharmonicDP}. BD has also found applications in other areas such as graph learning~\cite{kreuzer2021rethinking,black2023understanding}, physics~\cite{Tyloo2017RobustnessOS}, and other network related fields~\cite{fan2020spectral,bd14,tyloo2019key}.

Despite its importance, the computational landscape of BD, especially for expansive graphs, remains challenging. While Yi et al.~\cite{YiSh2018} proposed a promising algorithm for approximating all-pairs BD, the algorithm's dependency on the entire graph input and the construction of a large dense matrix in the preprocessing step makes it inapplicable to modern, rapidly expanding networks. Recognizing this gap, our paper delves deeper into devising fast algorithms for computing and approximating BD. Our contributions can be summarized as follows:

\begin{itemize}[leftmargin=*]
    \item We first develop a novel formula of BD utilizing the expansion of the Laplacian pseudoinverse.
    \item Leveraging this formula, we present two novel local algorithms, \texttt{Push} and \texttt{Push+}, for estimating pairwise BD. To reduce computational complexity, we introduce a random walk-based algorithm, \STW. Our pinnacle contribution is the \SWF algorithm, which employs empirical estimates of the variance to reduce the number of random walks used by \STW.
    \item Based on the pairwise BD estimation algorithms, we also present \SNB, and its more efficient version \SNBp by integrating the fast summation estimation techniques, for estimating nodal BD (which is the sum of the BD between a specific node and other nodes).
    \item Our empirical studies, spanning real networks, underscore the superiority of our algorithms, often outperforming state-of-the-art solutions by significant margins.
\end{itemize}

\section{Preliminary}\label{sec:pre}
\subsection{Notations}
Throughout this paper, we utilize bold lowercase (e.g., $\xx$) for vectors and uppercase (e.g., $\MM$) for matrices. Elements are indicated by subscripts, e.g., $\xx_i$ or $\MM_{i,j}$. For matrices, $\MM[i,:]$ and $\MM[:,j]$ represent the $i$-th row and $j$-th column, respectively. We use $\ee_i$ to denote $i$-th standard basis vector of appropriate dimension, with $i$-th element being 1 and other elements being 0. We use $\mathbf{1}$ (resp. $\JJ$) to denote the all-ones (column) vector (resp. matrix) of proper dimension. We use $\xx^{\top}$ to denote the transpose of vector $\xx$. Furthermore, let $\mathbb{I}_{x=y}$ be an indicator function, which equals 1 if $x=y$ and otherwise 0.

Consider a connected undirected graph (network) $\calG = (V, E)$ with nodes $V$ and edges $E \subseteq V \times V$. Let $n := |V|$ and $m := |E|$ denote the number of nodes and the number of edges, respectively. We use $\mathcal{N}\left(i\right)$ to denote the set of neighbors of $i$, where the degree is $\dd_{i}=\left|\mathcal{N}\left(i\right)\right|$. The Laplacian matrix of $\calG$ is the symmetric matrix $\LL = \DD - \AA$, where $\AA\in\{0,1\}^{n\times n}$ is the adjacency matrix whose entry $\AA_{i,j}=1$ if node $i$ and node $j$ are adjacent, and $\AA_{i,j}=0$ otherwise, and $\DD$ is the diagonal matrix $\DD=\text{diag}(\dd_1,\cdots,\dd_n)$. For any pair of distinct nodes $u,v\in V$, we define $\bb_{uv} = \ee_{u}-\ee_{v}$. Let $\gamma_1 \leq \gamma_2 \leq$ $\gamma_3 \leq \cdots \leq \gamma_n$ be the eigenvalues of $\LL$. It is well-known that $\gamma_1=0$, and $\gamma_2$ is the algebraic connectivity of $\calG$. Matrix $\LL$ is positive semi-definite. Its Moore-Penrose pseudoinverse is $\LL^\dag = \big(\LL +\frac{1}{n}\JJ\big)^{-1}-\frac{1}{n}\JJ$. We use $\widetilde{\LL}=\DD^{-1/2}\LL\DD^{-1/2}$ to denote the normalized Laplacian matrix.

Random walks underpin our proposed algorithms. Let $\PP=\DD^{-1} \AA$ be the random walk matrix (i.e., transition matrix) of $\mathcal{G}$, in which $\PP_{i, j}=\frac{1}{\dd_{i}}$ if $e_{i, j} \in E$ and $\PP_{i, j}=0$ otherwise. Correspondingly, we denote $p_{\ell}\left(i, j\right)=\PP^{\ell}_{i, j}$, which can be interpreted as the probability of a random walk from node $i$ visits node $j$ at the $\ell$-th hop. In this paper, following the convention, we assume $\mathcal{G}$ is not bipartite (which is the case for most real-world networks). According to~\cite{Motwani1995RandomizedA}, the random walks over $\mathcal{G}$ are ergodic, i.e., $\lim _{\ell \rightarrow \infty} \PP^{\ell}_{i, j}=\bm{\pi}_j = \frac{\dd_{j}}{2 m}$ for any $i, j \in V$, where $\bm{\pi}$ denotes the stationary distribution of a random walk starting from any node. Let $\QQ=\DD^{-1/2} \AA \DD^{-1/2}=\DD^{1 / 2} \PP \DD^{-1 / 2}$. Recall that $\LL=\DD-\AA=\DD^{1 / 2}(\II-\QQ) \DD^{1 / 2}$. Note that $\QQ$ is symmetric and is similar to $\PP$. Let $\lambda_1 \geq \lambda_2 \geq$ $\lambda_3 \geq \cdots \geq \lambda_n$ be the eigenvalues of $\QQ$ (and also $\PP$ by the similarity of $\PP$ and $\QQ$ ), with corresponding (row) orthonormal eigenvectors $\uu_1, \uu_2, \ldots, \uu_n$, i.e., $\uu_j \QQ=\lambda_j \uu_j$. Let $\ww_1, \ww_2, \ldots, \ww_n$ be the column eigenvectors of $\PP$, i.e., $\PP\ww_j=\lambda_j\ww_j$. Let $\lambda=\max \left\{\left|\lambda_2\right|,\left|\lambda_n\right|\right\}$. Notably, $\lambda_1=1$, $\uu_1=\frac{\mathbf{1} \DD^{1 / 2}}{\sqrt{2 m}}$ and $\ww_1=\mathbf{1}$~\cite{haveliwala2003second}.

\subsection{Problem Definition}
Since its introduction in~\cite{lipman2010biharmonic}, biharmonic distance in graphs has been extensively studied~\cite{bd14,ACC2018BHD,YiSh2018, Yi2022BiharmonicDP,zhang2020fast,black2023understanding,chen2007resistance,wei2021biharmonic}. Smaller BD implies a closer node connection, while a larger distance suggests a more indirect connection. Its formal definition is as follows.
\begin{definition}\label{def:biharmonic}
    (Pairwise BD~\cite{YiSh2018})
    For a graph $\calG=(V, E)$ with the Laplacian matrix $\LL$, the biharmonic distance $b(s, t)$ between any pair of distinct nodes $s,t\in V$ is defined by
    \begin{align}\label{eq:exact}
        b^2(s,t)=\bb_{st}^{\top}\LL^{2\dag}\bb_{st}=\|\LL^{\dag}\bb_{st}\|^2 = \LL_{s,s}^{2\dag} + \LL_{t,t}^{2\dag} - 2\LL_{s,t}^{2\dag}.
    \end{align}
\end{definition}

Hereafter, we will refer to the square of BD $b^2(s,t)$ as $\beta(s,t)$. To avoid redundancy, we will sometimes refer to $\beta(s,t)$ as BD instead of the square of BD, when it is clear from the context.

According to~\cite{wei2021biharmonic}, the BD between any distinct nodes $s$ and $t$ satisfies $2\gamma_n^{-2}\leq \beta(s,t)\leq 2\gamma_2^{-2}$, implying that the more well-connected the network is, the smaller the difference in BD between distinct node pairs.

Based on the pairwise BD, we can derive the nodal BD as follows.
\begin{definition}\label{def:biharmonic_centrality}
    (Nodal BD~\cite{ACC2018BHD,Yi2022BiharmonicDP})
    For a graph $\calG$, the biharmonic distance of a single node $s\in V$ is defined as:
    \begin{align}
        \beta(s)=\sum_{t\in V\backslash \{s\}}\beta(s,t)=n\LL_{s,s}^{2\dag} + \trace{\LL^{2\dag}}.
    \end{align}
\end{definition}

The exact computation of BD involves inverting the Laplacian matrix $\LL$ of $\calG$ with a time complexity of $O(n^{2.3727})$~\cite{leiserson1994introduction}. A natural resort is to attempt to approximate the BD values. In this paper, we focus on approximating BD values and study both the pairwise biharmonic distance query and the nodal biharmonic distance query problems, which are formally defined as follows.
\begin{problem}
    \label{pro:single-pairbd}
    (Pairwise BD query) Given a graph $\calG$, a pair of nodes $(s,t)$ with $s\neq t$, and an arbitrarily small additive error $\eps$, the problem of pairwise BD query is to find $\beta^\prime(s,t)$ such that:\begin{align}
    |\beta(s,t)-\beta^\prime(s,t)|\leq \eps.
    \end{align}
\end{problem}
\begin{problem}
    \label{def:single-source}
    (Nodal BD query) Given a graph $\calG$, a source node $s$, and an arbitrarily small additive error $\eps$, the nodal BD query problem is to find $\beta^{\prime}(s)$ such that:
    \begin{align}
    |\beta(s)-\beta^\prime(s)|\leq n\eps.
    \end{align}
\end{problem}

Biharmonic distance aligns with effective resistance on a related signed graph~\cite{signed2019}, given that $\LL^{2}$ is the ``repelling'' signed Laplacian matrix~\cite{chen2020spectral}. Thus, calculating biharmonic distance
 mirrors computing effective resistance on this signed graph. This suggests that efficient biharmonic distance computation can enhance understanding of effective resistance on signed graphs~\cite{signed2019}.

\subsection{Existing Algorithms}
Approximating BD involves solving the linear Laplacian system $\LL\xx = \bb$, which has been extensively explored in the theoretical computer science. Despite advancements, the most efficient Laplacian solver takes $\Tilde{O}(m)$ time~\cite{SpTe14,CoKyMiPaJaPeRaXu14,solver2023}, proving expensive for large-scale graphs (with millions of nodes and edges). Based on the Laplacian solver, the authors of~\cite{Yi2022BiharmonicDP} introduced a random projection-based method that permits an approximate biharmonic distance query for any node pair in $\calG$ within $O(\log n)$ time. Yet, its preprocessing step, constructing a $(24\log n/\eps^2)\times n$ matrix for a given $\eps$, requires $\tilde{O}(m/\eps^2)$ time, making it unsuitable for large graphs. In many real-world scenarios, we are only concerned with the biharmonic distances for a limited set of critical node pairs, which makes this preprocessing overhead undesirable. Furthermore, their algorithm requires full knowledge of the underlying graph, which is not feasible in many real-world examples. To overcome these issues, we propose some novel algorithms that only need local information (not the entire graph) and could efficiently answer biharmonic distance queries for a set of node pairs, without unnecessary preprocessing overhead. 
\section{The push algorithm}\label{sec:push1}
In this section, we first present a novel formula of BD and subsequently provide a local algorithm \texttt{Push} which relies on approximating the Laplacian pseudo-inverse $\LL^{\dag}$ of the graph.

\subsection{New Formula for Biharmonic Distance}
\begin{lemma}\label{lem:new_formula}
Let $\hh=\sum_{i=0}^{\infty}\bb_{st}^{\top}\PP^i\DD^{-1}$ where $s,t\in V$ are any two distinct nodes, then\begin{align}
        \beta(s,t) = \|\hh\|_2^2-\frac{1}{n}(\hh\mathbf{1})^2.\label{eq:33}
    \end{align}
\end{lemma}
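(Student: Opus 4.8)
The plan is to turn the series defining $\hh$ into a closed form built from $(\II-\QQ)^{\dag}$ and then recognize that closed form as a solution of $\LL\xx=\bb_{st}$. First I would symmetrize: from $\QQ=\DD^{1/2}\PP\DD^{-1/2}$ one gets $\PP^{i}=\DD^{-1/2}\QQ^{i}\DD^{1/2}$, so each summand becomes $\bb_{st}^{\top}\PP^{i}\DD^{-1}=\bb_{st}^{\top}\DD^{-1/2}\QQ^{i}\DD^{-1/2}$, hence $\hh=\big(\sum_{i\ge 0}\bb_{st}^{\top}\DD^{-1/2}\QQ^{i}\big)\DD^{-1/2}$ once convergence is justified. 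The remaining work is to identify the bracketed series with $\bb_{st}^{\top}\DD^{-1/2}(\II-\QQ)^{\dag}$, transpose, and match the result to $\LL^{\dag}\bb_{st}$.

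For the convergence and pseudoinverse step, I would expand the row vector $\bb_{st}^{\top}\DD^{-1/2}$ in the orthonormal eigenbasis $\uu_1,\dots,\uu_n$ of $\QQ$. Its component along $\uu_1=\one^{\top}\DD^{1/2}/\sqrt{2m}$ is $\bb_{st}^{\top}\DD^{-1/2}\uu_1^{\top}=\tfrac{1}{\sqrt{2m}}\,\bb_{st}^{\top}\one=0$. Hence $\bb_{st}^{\top}\DD^{-1/2}\QQ^{i}$ stays in $\mathrm{span}\{\uu_2,\dots,\uu_n\}$ and has norm at most $\lambda^{i}\norm{\bb_{st}^{\top}\DD^{-1/2}}$, where $\lambda=\max\{|\lambda_2|,|\lambda_n|\}<1$ since $\calG$ is connected (so $\lambda_2<1$) and non-bipartite (so $\lambda_n>-1$). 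The series therefore converges absolutely, and summing $\sum_{i\ge 0}\lambda_j^{i}=(1-\lambda_j)^{-1}$ on each eigencomponent $j\ge 2$ gives $\sum_{i\ge 0}\bb_{st}^{\top}\DD^{-1/2}\QQ^{i}=\bb_{st}^{\top}\DD^{-1/2}(\II-\QQ)^{\dag}$, using $(\II-\QQ)^{\dag}=\sum_{j\ge 2}(1-\lambda_j)^{-1}\uu_j^{\top}\uu_j$ (the $j=1$ mode lies in the kernel). Transposing, $\hh^{\top}=\DD^{-1/2}(\II-\QQ)^{\dag}\DD^{-1/2}\bb_{st}=:\xx$.

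Next I would connect $\xx$ to $\LL^{\dag}\bb_{st}$ and conclude. From $\LL=\DD^{1/2}(\II-\QQ)\DD^{1/2}$ we get $\LL\xx=\DD^{1/2}(\II-\QQ)(\II-\QQ)^{\dag}\DD^{-1/2}\bb_{st}$; since $(\II-\QQ)(\II-\QQ)^{\dag}=\II-\uu_1^{\top}\uu_1$ and $\uu_1^{\top}\uu_1\DD^{-1/2}\bb_{st}\propto\DD^{1/2}\one(\one^{\top}\bb_{st})=0$, this collapses to $\LL\xx=\bb_{st}$. Because $\ker\LL=\mathrm{span}(\one)$ and $\LL^{\dag}\bb_{st}$ is the unique preimage of $\bb_{st}$ orthogonal to $\one$, it follows that $\LL^{\dag}\bb_{st}=\big(\II-\tfrac1n\JJ\big)\xx$. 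Taking squared norms, using that $\II-\tfrac1n\JJ$ is a symmetric idempotent and that $\norm{\xx}^2=\norm{\hh}_2^2$ and $\one^{\top}\xx=\hh\one$, yields $\beta(s,t)=\norm{\LL^{\dag}\bb_{st}}^2=\xx^{\top}\big(\II-\tfrac1n\JJ\big)\xx=\norm{\hh}_2^2-\tfrac1n(\hh\one)^2$, which is \eqref{eq:33}.

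The main obstacle is the convergence analysis: the bare series $\sum_{i}\QQ^{i}$ diverges because $\QQ$ has eigenvalue $1$, so the argument genuinely relies on the left factor $\bb_{st}^{\top}\DD^{-1/2}$ annihilating that eigen-mode and on $\lambda<1$ for the surviving modes — precisely the points where connectedness (simplicity of the top eigenvalue) and non-bipartiteness of $\calG$ are needed. Once this is in place, the rest is a short linear-algebra computation.
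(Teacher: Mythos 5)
Your proof is correct, and it reaches Eq.~\eqref{eq:33} by a route whose key step differs from the paper's. The paper starts from the cited Moore--Penrose identity $\LL^{\dag}=(\II-\frac{1}{n}\JJ)\DD^{-1/2}\widetilde{\LL}^{\dag}\DD^{-1/2}(\II-\frac{1}{n}\JJ)$, expands $\widetilde{\LL}^{\dag}=\sum_{j\ge 2}(1-\lambda_j)^{-1}\uu_j^{\top}\uu_j$ into the series $\sum_{i\ge 0}(\QQ^i-\uu_1^{\top}\uu_1)$, and substitutes this directly into the quadratic form $\bb_{st}^{\top}\LL^{2\dag}\bb_{st}$, simplifying with $\bb_{st}^{\top}\one=0$. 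You never invoke that identity: instead you show $\hh^{\top}=\DD^{-1/2}(\II-\QQ)^{\dag}\DD^{-1/2}\bb_{st}$ is a particular solution of $\LL\xx=\bb_{st}$, use the characterization of $\LL^{\dag}\bb_{st}$ as the unique preimage orthogonal to $\one$ to get $\LL^{\dag}\bb_{st}=(\II-\frac{1}{n}\JJ)\hh^{\top}$, and finish by taking the squared norm of this idempotent projection. Both arguments rest on the same ingredients — the symmetrization $\PP^i=\DD^{-1/2}\QQ^i\DD^{1/2}$, the spectral expansion of $\QQ$, and the orthogonality of $\bb_{st}^{\top}\DD^{-1/2}$ to the top eigenmode $\uu_1$ — but your organization is more self-contained (no appeal to the external pseudoinverse formula) and makes the convergence of the series explicit, including exactly where connectedness and non-bipartiteness enter; the paper interchanges $\sum_j$ and $\sum_i$ without comment. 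What the paper's substitution buys in exchange is directness: it works straight from the definition $\beta(s,t)=\bb_{st}^{\top}\LL^{2\dag}\bb_{st}$ and needs no separate argument about kernel-orthogonal solutions of $\LL\xx=\bb_{st}$.
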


\begin{proof}
    To prove this lemma, we need two materials. The first is that the pseudoinverse of matrix $\LL$ is related to the normalized Laplacian matrix~\cite{bozzo2013moore} and can be written as
$$
\LL^{\dagger}=\left(\II-\frac{1}{n} \JJ\right) \DD^{-1 / 2} \widetilde{\LL}^{\dagger} \DD^{-1 / 2}\left(\II-\frac{1}{n} \JJ\right).
$$
    Second, we have
    \begin{align}
    \notag&\quad\DD^{-\frac{1}{2}}\widetilde{\LL}^{\dag}\DD^{-\frac{1}{2}}\\
    \notag& =\DD^{-\frac{1}{2}} \sum_{j=2}^n \frac{1}{1-\lambda_j} \uu_j^{\top} \uu_j \DD^{-\frac{1}{2}}=\DD^{-\frac{1}{2}} \sum_{j=2}^n \sum_{i=0}^{\infty} \lambda_j^i \uu_j^{\top} \uu_j \DD^{-\frac{1}{2}} \\
    \notag& =\DD^{-\frac{1}{2}} \sum_{i=0}^{\infty} \sum_{j=2}^n \lambda_j^i \uu_j^{\top} \uu_j \DD^{-\frac{1}{2}}=\DD^{-\frac{1}{2}} \sum_{i=0}^{\infty}\left(\QQ^i-\uu_1^{\top} \uu_1\right) \DD^{-\frac{1}{2}}.
    \end{align}
    Therefore, with the above two materials in hand, 
    \begin{align}
    \notag&\quad\beta(s,t)=b^2(s,t)=\bb_{st}^{\top}\LL^{2\dag}\bb_{st}\\
    \notag&=\bb_{st}^{\top}(\DD^{-\frac{1}{2}}\widetilde{\LL}^{\dag}\DD^{-1}\widetilde{\LL}^{\dag}\DD^{-\frac{1}{2}}-\frac{1}{n}\DD^{-\frac{1}{2}}\widetilde{\LL}^{\dag}\DD^{-\frac{1}{2}}\JJ\DD^{-\frac{1}{2}}\widetilde{\LL}^{\dag}\DD^{-\frac{1}{2}})\bb_{st}\\
    \notag&=\bb_{st}^{\top}\Big(\DD^{-\frac{1}{2}}\sum\limits_{i=0}^{\infty}(\QQ^i-\uu_1^{\top}\uu_1)\DD^{-1}\sum\limits_{i=0}^{\infty}(\QQ^i-\uu_1^{\top}\uu_1)\DD^{-\frac{1}{2}}-\\
    \notag&\quad \frac{1}{n}\DD^{-\frac{1}{2}}\sum\limits_{i=0}^{\infty}(\QQ^i-\uu_1^{\top}\uu_1)\DD^{-\frac{1}{2}}\JJ\DD^{-\frac{1}{2}}\sum\limits_{i=0}^{\infty}(\QQ^i-\uu_1^{\top}\uu_1)\DD^{-\frac{1}{2}}\Big)\bb_{st}\\
    \notag&=\sum\limits_{i=0}^{\infty}\bb_{st}^{\top}\DD^{-\frac{1}{2}}\QQ^{i}\DD^{-1}\sum\limits_{i=0}^{\infty}\QQ^i\DD^{-\frac{1}{2}}\bb_{st}-\\
    \notag&\quad\frac{1}{n}\sum\limits_{i=0}^{\infty}\bb_{st}^{\top}\DD^{-\frac{1}{2}}\QQ^{i}\DD^{-\frac{1}{2}}\JJ\DD^{-\frac{1}{2}}\sum\limits_{i=0}^{\infty}\QQ^i\DD^{-\frac{1}{2}}\bb_{st}\\
    \notag&=\sum\limits_{i=0}^{\infty}\bb_{st}^{\top}\PP^i\DD^{-1}\sum\limits_{i=0}^{\infty}\PP^i\DD^{-1}\bb_{st}-\frac{1}{n}\sum\limits_{i=0}^{\infty}\bb_{st}^{\top}\PP^i\DD^{-1}\mathbf{1}\sum\limits_{i=0}^{\infty}\mathbf{1}^{\top}\PP^i\DD^{-1}\bb_{st},
    \end{align}
    which concludes the proof.
\end{proof}

\subsection{A Universal Bound of Truncated Length}
Given an integer $\ell\in[0,\infty]$, for any two distinct nodes $s,t\in V$, let $\hh^{\ell}=\sum_{i=0}^{\ell-1}\bb_{st}^{\top}\PP^i\DD^{-1}$, then we define 
\begin{align}
    \label{eq:betal}
    \beta^{\ell}(s,t)=\|\hh^{\ell}\|_2^2-\frac{1}{n}(\hh^{\ell}\mathbf{1})^2.
\end{align}
We aim to choose an appropriate $\ell$ to approximate $\beta(s,t)$ by $\beta^{\ell}(s,t)$ with bounded error. In the lemma below, we provide a universal bound of length for any pair of nodes.

\begin{lemma}\label{lem:uni_l}
For any additive error $\eps>0$, it holds that $|\beta(s,t)-\beta^{\ell}(s,t)|\leq \eps/2$, where the truncated length $\ell$ satisfies
\begin{align}\label{eq:uni_l}
    \ell=\left\lceil\frac{\log (12n/(\eps(1-\lambda)^2))}{\log(1/\lambda)}\right\rceil.
\end{align}
\end{lemma}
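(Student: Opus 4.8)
The plan is to bound the error $|\beta(s,t) - \beta^{\ell}(s,t)|$ by controlling the tail $\hh - \hh^{\ell} = \sum_{i=\ell}^{\infty} \bb_{st}^{\top}\PP^i\DD^{-1}$. First I would write $\beta(s,t) - \beta^{\ell}(s,t)$ using the formula from Lemma~\ref{lem:new_formula} and equation~\eqref{eq:betal}. Writing $\hh = \hh^{\ell} + \gg$ with $\gg = \sum_{i=\ell}^{\infty}\bb_{st}^{\top}\PP^i\DD^{-1}$, we get
\begin{align}
\notag \beta(s,t) - \beta^{\ell}(s,t) = \bigl(2\hh^{\ell}\cdot\gg + \|\gg\|_2^2\bigr) - \tfrac{1}{n}\bigl(2(\hh^{\ell}\mathbf{1})(\gg\mathbf{1}) + (\gg\mathbf{1})^2\bigr),
\end{align}
so by the triangle inequality and Cauchy--Schwarz it suffices to bound $\|\hh^{\ell}\|_2$, $\|\gg\|_2$, $|\hh^{\ell}\mathbf{1}|/\sqrt{n}$, and $|\gg\mathbf{1}|/\sqrt{n}$, all of which reduce to bounding norms of the partial sums $\sum_i \bb_{st}^{\top}\PP^i\DD^{-1}$.

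The key estimate is a bound on $\|\bb_{st}^{\top}\PP^i\DD^{-1}\|_2$. The natural approach is to pass to the symmetric matrix $\QQ = \DD^{1/2}\PP\DD^{-1/2}$, since $\PP^i\DD^{-1} = \DD^{-1/2}\QQ^i\DD^{-1/2}$ and $\QQ$ is symmetric with spectral decomposition $\QQ = \sum_j \lambda_j \uu_j^{\top}\uu_j$. The vector $\bb_{st}^{\top}\DD^{-1/2}$ is orthogonal to $\uu_1 = \mathbf{1}\DD^{1/2}/\sqrt{2m}$ (because $\bb_{st}^{\top}\mathbf{1} = 0$), so $\bb_{st}^{\top}\DD^{-1/2}\QQ^i$ has its spectral content supported on $\uu_2,\dots,\uu_n$, giving $\|\bb_{st}^{\top}\DD^{-1/2}\QQ^i\|_2 \le \lambda^i \|\bb_{st}^{\top}\DD^{-1/2}\|_2 \le \lambda^i\sqrt{2}$ (using $\dd_i \ge 1$). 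Then applying $\DD^{-1/2}$ again costs at most another factor, so $\|\bb_{st}^{\top}\PP^i\DD^{-1}\|_2 \le \sqrt{2}\,\lambda^i$ (possibly up to a small constant that I would absorb into the constant $12$ in the bound); likewise $|\bb_{st}^{\top}\PP^i\DD^{-1}\mathbf{1}| \le \sqrt{2n}\,\lambda^i$ by Cauchy--Schwarz against $\mathbf{1}$. Summing the geometric series gives $\|\gg\|_2 \le \sqrt{2}\,\lambda^{\ell}/(1-\lambda)$ and $\|\hh^{\ell}\|_2 \le \|\hh\|_2 \le \sqrt{2}/(1-\lambda)$, and similarly $|\gg\mathbf{1}|/\sqrt{n} \le \sqrt{2}\,\lambda^{\ell}/(1-\lambda)$, $|\hh^{\ell}\mathbf{1}|/\sqrt{n} \le \sqrt{2}/(1-\lambda)$.

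Combining these, each of the (at most) six cross terms is bounded by a constant times $\lambda^{\ell}/(1-\lambda)^2$ (the $\|\gg\|_2^2$ and $(\gg\mathbf 1)^2/n$ terms are even smaller, of order $\lambda^{2\ell}/(1-\lambda)^2 \le \lambda^\ell/(1-\lambda)^2$), so $|\beta(s,t) - \beta^{\ell}(s,t)| \le C\lambda^{\ell}/(1-\lambda)^2$ for an absolute constant $C$ which a careful accounting makes $\le 12$. Finally, I would solve $12\lambda^{\ell}/(1-\lambda)^2 \le \eps/2$ for $\ell$: taking logarithms, this is $\ell \log(1/\lambda) \ge \log(24/(\eps(1-\lambda)^2))$... matching the stated $\ell = \lceil \log(12n/(\eps(1-\lambda)^2))/\log(1/\lambda)\rceil$ once the precise constants from the $\DD^{-1/2}$ steps (which contribute the extra factor of $n$ versus a naive count) are tracked.

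The main obstacle is getting the constants right through the two applications of $\DD^{-1/2}$ and the passage between $\ell_2$ norms of vectors of differing "weightings" --- in particular making sure the factor $n$ (as opposed to $m$ or $1$) appears in the logarithm, which comes from bounding $\|\DD^{-1/2}\vvv\|_2$ and $|\mathbf 1^{\top}\DD^{-1/2}\vvv|$ crudely using $\dd_i\ge 1$ and $\sum_i 1 = n$. The spectral-gap argument itself is routine once orthogonality to $\uu_1$ is established; the bookkeeping of which term contributes what power of $(1-\lambda)$ and of $\lambda^{\ell}$ is where care is needed.
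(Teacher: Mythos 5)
Your argument is essentially correct, but it takes a genuinely different route from the paper. The paper works entrywise: expanding $\bb_{st}^{\top}\DD^{-1/2}$ and $\DD^{-1/2}\ee_v$ in the eigenbasis of $\QQ$, it shows $|\hh_v|\leq \frac{1}{1-\lambda}$ and $|\hh_v-\hh_v^{\ell}|\leq\frac{\lambda^{\ell}}{1-\lambda}$ for each coordinate $v$, and then sums over the $n$ coordinates; this summation is exactly where the factor $n$ in the bound $\frac{6n\lambda^{\ell}}{(1-\lambda)^2}$ (and hence the $12n$ inside the logarithm) comes from. You instead bound global quantities, $\|\bb_{st}^{\top}\PP^i\DD^{-1}\|_2\leq\sqrt{2}\,\lambda^i$ via orthogonality to $\uu_1$ and $\|\DD^{-1/2}\|_2\leq 1$, and then use Cauchy--Schwarz on the cross terms. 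Carried out carefully, your route gives the dimension-free bound $|\beta(s,t)-\beta^{\ell}(s,t)|\leq \frac{12\lambda^{\ell}}{(1-\lambda)^2}$: the $\sqrt{n}$ in $|\gg\mathbf{1}|\leq\sqrt{2n}\,\lambda^{\ell}/(1-\lambda)$ is exactly cancelled by the $\frac{1}{n}$ prefactor, so \emph{no} factor of $n$ survives. This is actually stronger than the paper's bound; the lemma as stated then follows by monotonicity, since the prescribed $\ell$ satisfies $\lambda^{\ell}\leq\frac{\eps(1-\lambda)^2}{12n}$, giving error at most $\eps/n\leq\eps/2$. What each approach buys: yours yields a tighter universal truncation length (no $\log n$ needed), while the paper's coordinatewise bookkeeping is what generalizes to the degree-dependent refinement $\ell_{s,t}$ of Lemma~\ref{lem:cus_l}.

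Two loose ends you should fix. First, your closing discussion tries to ``track in'' the factor $n$ from the $\DD^{-1/2}$ steps to match the stated formula; this is misguided --- your estimates do not and should not produce that $n$, and the correct way to reconcile with the statement is the one-line monotonicity observation above. Second, the inequality $\|\hh^{\ell}\|_2\leq\|\hh\|_2$ is not justified (a partial sum need not have smaller norm than the full sum); you should instead bound $\|\hh^{\ell}\|_2\leq\sum_{i=0}^{\ell-1}\|\bb_{st}^{\top}\PP^i\DD^{-1}\|_2\leq\frac{\sqrt{2}}{1-\lambda}$ directly by the triangle inequality, which is all you need.
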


\begin{proof}
    To aid our analysis, we write $\bb_{st}^{\top}\DD^{-1/2}=\sum_{j=1}^n\alpha_j\uu_j$. Note that $\alpha_1=\bb_{st}^{\top}\DD^{-1/2}\uu_1^{\top}=0$. Thus $\bb_{st}^\top\DD^{-1/2}=\sum_{j=2}^n\alpha_j\uu_j$. Next, we write $\DD^{-1/2}\ee_v=\sum_{j=1}^n\beta_j\uu_j^{\top}$. Then, we have $\bb_{st}^{\top}\DD^{-1/2}\QQ^i\DD^{-1/2}\ee_v=\sum_{j=1}^n\lambda_j^i\alpha_j\beta_j$ and 
   $|\sum_{j=1}^n\alpha_j\beta_j|$$=|\bb_{st}^{\top}\DD^{-1/2}\DD^{-1/2}\ee_v|=|\bb_{st}^{\top}\DD^{-1}\ee_v|\leq|\bb_{st}^{\top}\ee_v|\leq 1$.
   
   Now, we first analyze the error of every element $\hh_v$ for $v\in V$:
   \begin{align*}
   &\quad|\hh_v-\hh_v^{\ell}|=\Big|\sum\limits_{i=\ell}^{\infty}\bb_{st}^{\top}\DD^{-1/2}\QQ^i\DD^{-1/2}\ee_v\Big|\\
   &=\Big|\sum\limits_{i=\ell}^{\infty}\sum\limits_{j=1}^{n}\alpha_j\uu_j\sum\limits_{j=1}^{n}\lambda_j^i\uu_j^{\top}\uu_j\sum\limits_{j=1}^{n}\beta_j\uu_j^{\top}\Big|\\
   &=\Big|\sum\limits_{i=\ell}^{\infty}\sum\limits_{j=1}^{n}\lambda_j^i\alpha_j\beta_j|\leq \sum\limits_{i=\ell}^{\infty}\lambda ^i|\sum\limits_{j=1}^{n}\alpha_j\beta_j\Big|
   \leq \frac{\lambda^\ell}{1-\lambda}.
   \end{align*}

   The above result shows that the larger the $\ell$, the smaller the approximation error is.
   
   Using the similar method, we can also obtain that $\hh_v\leq\frac{1}{1-\lambda}$.
   Then, we can obtain that
   \begin{align}
       \notag&\quad\Big|\|\hh\|_2^2-\|\hh^{\ell}\|_2^2\Big|=\Big|\sum_{v=1}^n(\hh_v^2-(\hh_v^{\ell})^2)\Big|\leq \sum_{v=1}^n|\hh_v^2-(\hh_v^{\ell})^2|\\
       \notag&=\sum_{v=1}^n|\hh_v-\hh_v^{\ell}||2\hh_v+\hh_v^{\ell}-\hh_v|\leq\sum_{v=1}^n|\hh_v-\hh_v^{\ell}|^2+|2\hh_v(\hh_v^{\ell}-\hh_v)|\\
       &\leq \sum_{v=1}^n\frac{\lambda^{2\ell}}{(1-\lambda)^2}+\frac{2\lambda^{\ell}}{(1-\lambda)^2}\leq\frac{3n\lambda^{\ell}}{(1-\lambda)^2},\label{eq:fst}
   \end{align}
   and 
   \begin{align*}
       &|\hh\mathbf{1}-\hh^{\ell}\mathbf{1}|\leq\sum_{v=1}^n|\hh_v-\hh_v^{\ell}|\leq \frac{n\lambda^\ell}{1-\lambda},
   \end{align*}
   and subsequently, we have 
   \begin{align}
       \notag&\quad|(\hh\mathbf{1})^2-(\hh^{\ell}\mathbf{1})^2|=|\hh\mathbf{1}-\hh^{\ell}\mathbf{1}||2\hh\mathbf{1}+\hh^{\ell}\mathbf{1}-\hh\mathbf{1}|\\
       &\leq \frac{n^2\lambda^{2\ell}}{(1-\lambda)^2} + \frac{2n^2\lambda^{\ell}}{(1-\lambda)^2}\leq\frac{3n^2\lambda^{\ell}}{(1-\lambda)^2}.\label{eq:scd}
   \end{align}
   Finally, combining Eq.~\eqref{eq:fst} and Eq.~\eqref{eq:scd} yields
   \begin{align}
       \notag&\big|\|\hh\|_2^2-\frac{1}{n}(\hh\mathbf{1})^2- \big(\|\hh^{\ell}\|_2^2-\frac{1}{n}(\hh^{\ell}\mathbf{1})^2\big)\big|
       \leq \\&\big|\|\hh\|_2^2- \|\hh^{\ell}\|_2^2|
       +\frac{1}{n}|(\hh^{\ell}\mathbf{1})^2-(\hh\mathbf{1})^2\big|\leq\frac{6n\lambda^{\ell}}{(1-\lambda)^2}\leq \eps/2,\notag
   \end{align}
   where the last inequality follows from 
   \begin{align*}
       \ell=\frac{\log (12n/(\eps(1-\lambda)^2))}{\log(1/\lambda)},
   \end{align*}
   concluding the proof.
\end{proof}

\subsection{Algorithm Description} 
Based on the above lemma, we propose an algorithm \texttt{Push} for computing BD. The \texttt{Push} algorithm conducts a deterministic graph traversal from $s$ and $t$ to compute the $i$-hop transition probability values $p_{i}(s,v)$ and $p_{i}(t,v)$ for $0\leq i \leq \ell$ in an iterative manner, as illustrated in Algorithm~\ref{alg:push}. After initial setup, the algorithm computes the truncated length and then undertakes an $\ell$-hop graph traversal originating from $s$ and $t$. Specifically, at the $i$-th hop, it first sets $p_{i}(s, x)=0$ and $p_{i}(t, x)=0$ for any node $x\in V$. Subsequently, for each node $j$ with non-zero $p_{i-1}(s, j)$ and $p_{i-1}(t, j)$ values, it pushes this value among $j$'s neighbors, essentially conducting a sparse matrix-vector multiplication. \texttt{Push} updates $\hh^{\ell}$ during each traversal round and, upon completion, returns the computed $\beta^{\prime}(s,t)= \beta^{\ell}(s,t)$. The following theorem states the worst-case time complexity of \texttt{Push}.
\begin{theorem}
    \label{the:11}
    The algorithm $\texttt{Push}(\calG,\eps,s,t)$ outputs an estimate $\beta^{\prime}(s,t) = \beta^{\ell}(s,t)$ such that $|\beta(s,t)-\beta^{\prime}(s,t)|\leq \eps$, and its run time is in $O(m\log (\frac{n}{\eps}))$ for $\ell=\left\lceil\frac{\log (6n/(\eps(1-\lambda)^2))}{\log(1/\lambda)}\right\rceil$.
\end{theorem}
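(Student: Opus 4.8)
The plan is to establish the two claims of Theorem~\ref{the:11} separately: first the accuracy bound, then the running time. For accuracy, the algorithm returns $\beta^{\prime}(s,t)=\beta^{\ell}(s,t)$ with $\ell=\lceil\log(6n/(\eps(1-\lambda)^2))/\log(1/\lambda)\rceil$. Tracing through the proof of Lemma~\ref{lem:uni_l}, the error bound is $6n\lambda^{\ell}/(1-\lambda)^2$, and plugging in this choice of $\ell$ gives $\lambda^{\ell}\le \eps(1-\lambda)^2/(6n)$, hence $|\beta(s,t)-\beta^{\ell}(s,t)|\le\eps$. (Note the constant $6$ here versus $12$ in Lemma~\ref{lem:uni_l} corresponds exactly to the factor-of-two between the $\eps/2$ guarantee there and the $\eps$ guarantee here, so no new estimate is needed — I would just cite Lemma~\ref{lem:uni_l} with $\eps$ replaced by $2\eps$.) So this part is essentially immediate from the earlier lemma; I would state it in one or two sentences.

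For the running time, I would argue as follows. The algorithm performs $\ell$ rounds of traversal from each of $s$ and $t$. In each round, the core operation is pushing the current probability vector $p_{i-1}(s,\cdot)$ (resp.\ $p_{i-1}(t,\cdot)$) along edges, which is a sparse matrix–vector multiplication by $\PP$. Crucially, even in the worst case a single such multiplication touches each edge at most a constant number of times (each node $j$ with nonzero mass sends to its $\dd_j$ neighbors, and $\sum_j \dd_j = 2m$), so one round costs $O(m)$, and maintaining the running sums $\hh^{\ell}$ and the scalar $\hh^{\ell}\mathbf{1}$ adds only $O(n)=O(m)$ per round (the graph is connected). Hence the total is $O(m\ell)$.

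It remains to bound $\ell = O(\log(n/\eps))$. Writing $\ell = \Theta\!\big(\tfrac{\log n + \log(1/\eps) + \log(1/(1-\lambda)^2)}{\log(1/\lambda)}\big)$, the numerator is $O(\log(n/\eps) + \log\frac{1}{1-\lambda})$. The delicate point is the denominator $\log(1/\lambda) = -\log\lambda$: when $\lambda$ is close to $1$ this is roughly $1-\lambda$, which can be as small as polynomially in $1/n$ for poorly connected graphs, so $\ell$ is not literally $O(\log(n/\eps))$ without some assumption. The way I would handle this — matching what the paper's abstract and statement implicitly assume — is to treat $1-\lambda$ (equivalently the spectral gap, related via $\widetilde\LL$ to $\gamma_2$) as bounded below by a constant, or more precisely to absorb $\mathrm{poly}\log$ factors and the $\frac{1}{1-\lambda}$ dependence into the $\Otil$/``$\log\frac n\eps$'' notation; under that reading, $-\log\lambda = \Theta(1-\lambda) = \Theta(1)$, the numerator is $O(\log\frac n\eps)$, and so $\ell = O(\log\frac n\eps)$, giving the claimed $O(m\log\frac n\eps)$.

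The main obstacle is precisely this last step: making the bound $\ell=O(\log(n/\eps))$ rigorous, since it hides a dependence on the spectral gap $1-\lambda$ that is not polylogarithmic in general. Everything else — the accuracy via Lemma~\ref{lem:uni_l} and the per-round $O(m)$ cost of sparse propagation — is routine bookkeeping. I would be explicit that the stated complexity presumes $1-\lambda$ (or the normalized algebraic connectivity) is treated as a constant, or state the fully honest bound $O\!\big(m\cdot\tfrac{\log(n/(\eps(1-\lambda)^2))}{\log(1/\lambda)}\big)$ and then simplify.
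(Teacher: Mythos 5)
Your proof is correct and is exactly the argument the paper intends --- indeed, the paper states Theorem~\ref{the:11} without any proof, so your write-up supplies the missing one: accuracy is immediate from Lemma~\ref{lem:uni_l} with the constant $12$ replaced by $6$ (trading the $\eps/2$ guarantee for $\eps$), and the runtime is $O(m)$ per push round (each round touches every edge $O(1)$ times, $\sum_j \dd_j = 2m$) times $\ell$ rounds. Your caveat is also well taken and worth keeping: the step $\ell = O(\log(n/\eps))$ silently treats $\log(1/\lambda)\approx 1-\lambda$ as a constant, so the fully honest bound is $O\big(m\cdot\frac{\log(n/(\eps(1-\lambda)^2))}{\log(1/\lambda)}\big)$, a spectral-gap dependence the paper suppresses both here and in its abstract.
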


Notwithstanding its unsatisfying worst-case time complexity, the actual number of graph traversal operations from each node in Algorithm~\ref{alg:push} (Lines 6--9) is far less than $m$ when $\eps$ is not too small.

\normalem
\begin{algorithm}[h]
	\caption{$\texttt{Push}(\calG, \eps,s,t)$}\label{alg:push}
	\Input{
		A connected graph $\calG=(V,E)$ with $n$ nodes; an error parameter $\eps$, two nodes $s$ and $t$
	}
	\Output{
		Estimated biharmonic distance $\beta^{\ell}(s,t)$
	}
	$\ell=Eq.\eqref{eq:uni_l}$; $\hh^{\ell}\gets\mathbf{0}^{1\times n}$\;
    $p_{0}(s,x)\gets0$ $\forall x\in V\backslash\{s\}$; $p_{0}(s,s)\gets1$\;
    $p_{0}(t,x)\gets0$ $\forall x\in V\backslash\{t\}$; $p_{0}(t,t)\gets1$\;
	\For{$i = 1$ to $\ell-1$}{
        $p_{i}(s,x)\gets0$ $\forall x\in V$; $p_{i}(t,x)\gets0$ $\forall x\in V$\;
		\For{$j\in V$ \text{with} $p_{i-1}(s,j)>0$}{
        \lFor{$x\in \calN(j)$}{$p_{i}(s,x)\gets p_{i}(s,x)+\frac{p_{i-1}(s,j)}{\dd_{j}}$}}  
        \For{$j\in V$ \text{with} $p_{i-1}(t,j)>0$}{
        \lFor{$x\in \calN(j)$}{$p_{i}(t,x)\gets p_{i}(t,x)+\frac{p_{i-1}(t,j)}{\dd_{j}}$}
        }  
        $\hh^{\ell}_x\gets \hh^{\ell}_x + \frac{p_{i}(s,x)}{\dd_x} - \frac{p_{i}(t,x)}{\dd_x}\forall x\in V$\;
	}
    $\hh^{\ell}\gets \hh^{\ell}\DD^{-1}$\;
 
    \Return $\beta^{\ell}(s,t)=\|\hh^{\ell}\|_2^2-\frac{1}{n}(\hh^{\ell}\mathbf{1})^2$\;
\end{algorithm}

\section{Refining Maximum Length}\label{sec:push2}
In this section, we introduce an improved estimation for the truncated length, factoring in the structural nuances around the node pair in question. The node-specific \textit{mixing time}, represented as $\zeta_s$ for a source node $s$, reflects the steps a random walk from $s$ needs to stabilize to the graph's stationary distribution, $\bm{\pi}$~\cite{Lang2020DistributedRW}. By setting $\ell=max\{\zeta_s, \zeta_t\}$, we can equate $\beta(s,t)$ to $\beta^{\ell}(s,t)$. Essentially, for a fixed $\ell$, node pairs with lower mixing times tend to approximate $\beta(s,t)$ more accurately. In simpler terms, shorter mixing times for nodes $s,t$ allow a reduced $\ell$ while preserving accuracy. To optimize large $\ell$ values, we propose a costumed $\ell$ for distinct node pairs in the following lemma.

\begin{lemma}\label{lem:cus_l}
For any additive error $\eps>0$, it holds that $|\beta(s,t)-\beta^{\ell}(s,t)|\leq \eps/2$, where the truncated length $\ell$ satisfies
\begin{align}
\label{eq:cus_l}
    \ell=\left\lceil\frac{\log \bigg(\frac{6\sum_{v=1}^n(\frac{1}{\dd_s}+\frac{1}{\dd_t}+\frac{2}{\dd_v})^2+\frac{6}{n}(\frac{n}{\dd_s}+\frac{n}{\dd_t}+\sum_{v=1}^n\frac{2}{\dd_v})^2}{\eps(1-\lambda)^2}\bigg)}{\log(1/\lambda)}\right\rceil.
\end{align}
\end{lemma}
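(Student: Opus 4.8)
The plan is to mirror the proof of Lemma~\ref{lem:uni_l} but to keep the per-coordinate bounds in terms of the actual degrees $\dd_s,\dd_t,\dd_v$ rather than collapsing everything to the crude bound~$1$. First I would revisit the key estimate in that proof: using the spectral expansion $\bb_{st}^{\top}\DD^{-1/2}=\sum_{j=2}^n\alpha_j\uu_j$ and $\DD^{-1/2}\ee_v=\sum_{j=1}^n\beta_j\uu_j^{\top}$, the tail of coordinate $v$ satisfies
\begin{align*}
|\hh_v-\hh_v^{\ell}|=\Big|\sum_{i=\ell}^{\infty}\sum_{j=2}^n\lambda_j^i\alpha_j\beta_j\Big|\leq \frac{\lambda^{\ell}}{1-\lambda}\Big|\sum_{j=1}^n\alpha_j\beta_j\Big|.
\end{align*}
The new ingredient is to bound $\big|\sum_{j=1}^n\alpha_j\beta_j\big|=|\bb_{st}^{\top}\DD^{-1}\ee_v|$ more tightly: writing $\bb_{st}=\ee_s-\ee_t$, we get $|\bb_{st}^{\top}\DD^{-1}\ee_v|=\big|\frac{\mathbb{I}_{v=s}}{\dd_s}-\frac{\mathbb{I}_{v=t}}{\dd_t}\big|\leq \frac{1}{\dd_s}+\frac{1}{\dd_t}$. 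Actually I expect the paper wants the quantity $\frac{1}{\dd_s}+\frac{1}{\dd_t}+\frac{2}{\dd_v}$ to come from bounding not the tail alone but the sum of the tail bound and the value $|\hh_v|$ (or $|\hh_v^\ell|$), since in Lemma~\ref{lem:uni_l} the factor $3n$ arose from $|\hh_v-\hh_v^\ell|^2+2|\hh_v||\hh_v^\ell-\hh_v|$, so both a ``difference'' bound and a ``magnitude'' bound are needed.

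Next I would carry out the same two-term decomposition as in Lemma~\ref{lem:uni_l}:
\begin{align*}
\big|\|\hh\|_2^2-\|\hh^{\ell}\|_2^2\big|\leq \sum_{v=1}^n\Big(|\hh_v-\hh_v^{\ell}|^2+2|\hh_v|\,|\hh_v-\hh_v^{\ell}|\Big),
\end{align*}
but now plug in the refined estimates. The goal is to show each summand is at most (something like) $\frac{\lambda^{\ell}}{(1-\lambda)^2}\big(\frac{1}{\dd_s}+\frac{1}{\dd_t}+\frac{2}{\dd_v}\big)^2$, using $\lambda^{2\ell}\le\lambda^\ell$ and bounding $|\hh_v|$ itself by a geometric-series argument of the form $\frac{1}{1-\lambda}\big(\frac{1}{\dd_s}+\frac{1}{\dd_t}+\frac{2}{\dd_v}\big)$ — here the extra $\frac{2}{\dd_v}$ should absorb the $i=0$ term $\bb_{st}^\top\DD^{-1}\ee_v$ together with its contribution in the cross term, or more simply it arises because $2\sqrt{ab}\le a+b$ lets one replace the cross term $2|\hh_v||\hh_v-\hh_v^\ell|$ by squares and recombine. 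Similarly, for the $(\hh\mathbf 1)^2$ term I would use $|\hh\mathbf 1-\hh^\ell\mathbf 1|\le\sum_v|\hh_v-\hh_v^\ell|\le\frac{\lambda^\ell}{1-\lambda}\sum_v(\frac{1}{\dd_s}+\frac1{\dd_t})$ but it's cleaner to instead note $\big|\sum_v\bb_{st}^\top\DD^{-1}\ee_v\big|$ telescopes, giving $\frac{n}{\dd_s}+\frac{n}{\dd_t}+\sum_v\frac{2}{\dd_v}$ in the numerator of~\eqref{eq:cus_l}; combining via $|(\hh\mathbf1)^2-(\hh^\ell\mathbf1)^2|\le|\hh\mathbf1-\hh^\ell\mathbf1|\,|2\hh\mathbf1+\hh^\ell\mathbf1-\hh\mathbf1|$ yields a bound of $\frac{\lambda^\ell}{(1-\lambda)^2}\big(\frac{n}{\dd_s}+\frac{n}{\dd_t}+\sum_v\frac2{\dd_v}\big)^2$. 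Adding the two contributions with the $\frac1n$ weight and setting the total $\le\eps/2$ gives exactly the stated $\ell$ after taking logarithms.

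The main obstacle I anticipate is getting the combinatorial constants and the exact grouping of terms to match the precise expression $\big(\frac{1}{\dd_s}+\frac{1}{\dd_t}+\frac{2}{\dd_v}\big)^2$ and $\big(\frac{n}{\dd_s}+\frac{n}{\dd_t}+\sum_v\frac{2}{\dd_v}\big)^2$ in~\eqref{eq:cus_l}: one has to decide which part of the $\frac{2}{\dd_v}$ (resp. $\sum_v\frac{2}{\dd_v}$) comes from the magnitude bound on $|\hh_v|$ and which from the cross term, and then verify that $|\hh_v-\hh_v^\ell|^2+2|\hh_v||\hh_v-\hh_v^\ell|\le \frac{\lambda^\ell}{(1-\lambda)^2}(\frac1{\dd_s}+\frac1{\dd_t}+\frac2{\dd_v})^2$ actually holds with those constants (using $\lambda^{2\ell}\le\lambda^\ell\le1$ and $2xy\le x^2+y^2$). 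Beyond this bookkeeping the argument is a routine adaptation of Lemma~\ref{lem:uni_l}, so once the term-by-term estimates are lined up correctly the conclusion follows by the same final step: $\frac{6}{(1-\lambda)^2}\lambda^\ell\cdot(\text{numerator terms})\le\eps/2$ rearranges to~\eqref{eq:cus_l}.
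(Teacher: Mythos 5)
Your high-level plan (redo Lemma~\ref{lem:uni_l} while keeping the degree information) is the right instinct, but the specific route you sketch has a genuine gap, and it sits exactly at the point you flag as the ``new ingredient.'' The inequality you start from, $\big|\sum_{i\geq\ell}\sum_{j}\lambda_j^i\alpha_j\beta_j\big|\leq\frac{\lambda^{\ell}}{1-\lambda}\big|\sum_{j}\alpha_j\beta_j\big|$, is not valid: the weights $\lambda_j^i$ differ across $j$, so you can only pull out $\lambda^i$ after moving the absolute value \emph{inside} the $j$-sum, i.e.\ the usable bound is $\frac{\lambda^{\ell}}{1-\lambda}\sum_j|\alpha_j\beta_j|$. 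Taken at face value, your version would give $|\hh_v-\hh_v^{\ell}|\leq\frac{\lambda^{\ell}}{1-\lambda}\,|\bb_{st}^{\top}\DD^{-1}\ee_v|$, which is $0$ for every $v\notin\{s,t\}$ --- clearly false. Consequently your candidate bound $\frac{1}{\dd_s}+\frac{1}{\dd_t}$ cannot be the source of the expression $\frac{1}{\dd_s}+\frac{1}{\dd_t}+\frac{2}{\dd_v}$, and both of your guesses for where the $\frac{2}{\dd_v}$ term comes from (absorbing the $i=0$ term, or an artifact of $2xy\leq x^2+y^2$ in the cross term) are incorrect.

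The missing ingredient is the spectral decomposition of $\PP$ in terms of its column eigenvectors $\ww_j$ together with the normalization $\sum_{j=1}^n\ww_{j,v}^2=1/\bm{\pi}_v=2m/\dd_v$ (quoted in the paper as Lemma~\ref{lem:bound_l}). Writing $\bb_{st}^{\top}\PP^{i}\DD^{-1}\ee_v=\frac{1}{2m}\sum_{j}(\ww_{j,s}-\ww_{j,t})\ww_{j,v}\lambda_j^i$, putting the absolute values inside the $j$-sum, and then applying $|ab|\leq a^2+b^2$ termwise gives $\sum_{j}|(\ww_{j,s}-\ww_{j,t})\ww_{j,v}|\leq\frac{2m}{\dd_s}+\frac{2m}{\dd_t}+\frac{4m}{\dd_v}$; this is exactly what produces the bound $\frac{\lambda^{\ell}}{1-\lambda}\big(\frac{1}{\dd_s}+\frac{1}{\dd_t}+\frac{2}{\dd_v}\big)$ on $|\hh_v-\hh_v^{\ell}|$ and the analogous bound on $|\hh_v|$. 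From there the $\|\hh\|_2^2$ and $(\hh\mathbf{1})^2$ parts are handled exactly as you describe (and as in Lemma~\ref{lem:uni_l}); note also that your proposed shortcut for the second part via ``telescoping'' $\sum_v\bb_{st}^{\top}\DD^{-1}\ee_v$ is not what is needed --- the paper simply sums the per-coordinate bound over $v$ to obtain $\frac{n}{\dd_s}+\frac{n}{\dd_t}+\sum_v\frac{2}{\dd_v}$.
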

\begin{proof}
    Let us first provide the lemma below.
    
    \begin{lemma}\cite{Yang2023EfficientEO}\label{lem:bound_l}
        Given an undirected graph $\calG$, for any two nodes $s, t \in V$ and any integer $i \geq 1$, the transition matrix $\PP$ can be decomposed as
        \begin{align}
        p_i(s, t)  =\sum_{j=1}^n \ww_{j,s} \ww_{j,t} \bm{\pi}_t \lambda_j^i, \text {and } \frac{1}{\bm{\pi}_t}  =\sum_{j=1}^n \ww_{j,t}^2 \label{eq:2}.
        \end{align}    
    \end{lemma}

    According to Eq.~\eqref{eq:2}, the following equations hold
    \begin{align*}
        \sum_{j=1}^n \ww_{j,s}^2&=\frac{2 m}{\dd_s}, \sum_{j=1}^n \ww_{j,t}^2=\frac{2 m}{\dd_t}, \text { and }\\
        \bb_{st}^{\top}\PP^{i}\DD^{-1}\ee_v&=\frac{p_i(s,v)}{\dd_v}-\frac{p_i(t,v)}{\dd_v}=\frac{1}{2m}\sum_{j=1}^n(\ww_{j,s}-\ww_{j,t})\ww_{j,v}\lambda_j^i.
    \end{align*}
    
    Now, we analyze the approximation error of every element $\hh_v$ for $v\in V$:
        \begin{align}
        \notag|\hh_v-\hh_v^{\ell}|&=\bigg|\sum\limits_{i=\ell}^{\infty}\bb_{st}^{\top}\PP^{i}\DD^{-1}\ee_v\bigg|=\bigg|\frac{1}{2m}\sum_{j=1}^n(\ww_{j,s}-\ww_{j,t})\ww_{j,v}\sum\limits_{i=\ell}^{\infty}\lambda_j^i\bigg|\\
        \notag&\leq \frac{1}{2m}\sum_{j=1}^n\left|(\ww_{j,s}-\ww_{j,t})\ww_{j,v}\right|\sum\limits_{i=\ell}^{\infty}\lambda^i\\
        \notag&\leq\frac{\lambda^{\ell}}{1-\lambda} \frac{1}{2m}\sum_{j=1}^n\left(|\ww_{j,s}\ww_{j,v}|+|\ww_{j,t}\ww_{j,v}|\right)\\
        \notag&\leq \frac{\lambda^{\ell}}{1-\lambda} \frac{1}{2m}\sum_{j=1}^n\left(\ww_{j,s}^2+\ww_{j,v}^2+\ww_{j,t}^2+\ww_{j,v}^2\right)\\
        \notag&=\frac{\lambda^{\ell}}{1-\lambda} \frac{1}{2m} \left(\frac{2m}{\dd_s}+\frac{2m}{\dd_t}+\frac{4m}{\dd_v}\right)\\
        &\notag=\frac{\lambda^{\ell}}{1-\lambda} \left(\frac{1}{\dd_s}+\frac{1}{\dd_t}+\frac{2}{\dd_v}\right).
        \end{align}
    
    Using similar method, we can derive that $$\hh_v\leq \frac{1}{1-\lambda} \left(\frac{1}{\dd_s}+\frac{1}{\dd_t}+\frac{2}{\dd_v}\right).$$
    Subsequently, we derive that 
    \begin{align}
        \notag&\quad\Big|\|\hh\|_2^2-\|\hh^{\ell}\|_2^2\Big|=\Big|\sum_{v=1}^n(\hh_v^2-(\hh_v^{\ell})^2)\Big|\leq \sum_{v=1}^n|\hh_v^2-(\hh_v^{\ell})^2|\\
        \notag&=\sum_{v=1}^n|\hh_v-\hh_v^{\ell}||2\hh_v+\hh_v^{\ell}-\hh_v|\leq\sum_{v=1}^n|\hh_v-\hh_v^{\ell}|^2+|2\hh_v(\hh_v^{\ell}-\hh_v)|\\
            \notag&\leq\frac{3\lambda^{\ell}}{(1-\lambda)^2}\sum_{v=1}^n\Big(\frac{1}{\dd_s}+\frac{1}{\dd_t}+\frac{2}{\dd_v}\Big)^2.
    \end{align}
    For the second component, we first note that 
    \begin{align*}
        &|\hh\mathbf{1}-\hh^{\ell}\mathbf{1}|\leq\sum_{v=1}^n|\hh_v-\hh_v^{\ell}|\leq \frac{\lambda^{\ell}}{1-\lambda}\sum_{v=1}^n\Big(\frac{1}{\dd_s}+\frac{1}{\dd_t}+\frac{2}{\dd_v}\Big),
    \end{align*}
    and $|\hh\mathbf{1}| \leq \frac{1}{1-\lambda}\sum_{v=1}^n\Big(\frac{1}{\dd_s}+\frac{1}{\dd_t}+\frac{2}{\dd_v}\Big).$
    Subsequently, we have
    \begin{align*}
        &|(\hh\mathbf{1})^2-(\hh^{\ell}\mathbf{1})^2|\leq \frac{3\lambda^{\ell}}{(1-\lambda)^2}\Big(\frac{n}{\dd_s}+\frac{n}{\dd_t}+\sum_{v=1}^n\frac{2}{\dd_v}\Big)^2.
    \end{align*}
    Finally,
    \begin{align}
            \notag&\Big|\|\hh\|_2^2-\frac{1}{n}(\hh\mathbf{1})^2- (\|\hh^{\ell}\|_2^2-\frac{1}{n}(\hh^{\ell}\mathbf{1})^2)\Big| \leq\\
            &
            \notag\frac{3\lambda^{\ell}}{(1-\lambda)^2}\Big(\sum_{v=1}^n\Big(\frac{1}{\dd_s}+\frac{1}{\dd_t}+\frac{2}{\dd_v}\Big)^2+\frac{1}{n}\Big(\frac{n}{\dd_s}+\frac{n}{\dd_t}+\sum_{v=1}^n\frac{2}{\dd_v}\Big)^2\Big)\leq \eps/2,
    \end{align}
    where the last inequality follows from 
    \begin{align}
        \notag\ell=\frac{\log \bigg(\frac{6\sum_{v=1}^n(\frac{1}{\dd_s}+\frac{1}{\dd_t}+\frac{2}{\dd_v})^2+\frac{6}{n}(\frac{n}{\dd_s}+\frac{n}{\dd_t}+\sum_{v=1}^n\frac{2}{\dd_v})^2}{\eps(1-\lambda)^2}\bigg)}{\log(1/\lambda)},
    \end{align}
    concluding the proof.
    \end{proof}    

To distinguish between the first universal truncated length $\ell$ and the second customized length $\ell$ for a specified pair of nodes, hereafter we refer to them as $\ell$ and $\ell_{s,t}$, respectively.

\textbf{The \texttt{Push+} Algorithm}. Replacing $\ell$ in Line 1 of Algorithm~\ref{alg:push} with $\ell_{s,t}$ yields a new algorithm \texttt{Push}+. It's crucial to note that in sparser networks with low-degree querying nodes, $\ell_{s,t}$ can exceed $\ell$. The choice between them depends on both local and global network structures. Specifically, $\ell_{s,t}$ suits high-degree nodes in dense networks, while $\ell$ is preferred for low-degree nodes in sparser contexts. Conveniently, a simple computation lets us choose the smaller one.

\section{Approximating by sampling random walks}\label{sec:walk1}

\texttt{Push} and \texttt{Push+} conduct deterministic graph traversal over $\calG$ by sequentially visiting all nodes reachable from $s$ and $t$. Given the small-world property common to social and information networks, most nodes can be quickly reached from one another. As an illustration, by 2012, 92\% of reachable pairs on Facebook were within five steps~\cite{Backstrom2011FourDO}. Thus, when $\ell$ is large and nodes $s, t$ have multiple neighbors, vectors $\PP^{i-1}[s,:]$ and $\PP^{i-1}[t,:]$ become dense quickly. Consequently, each round of pushing can take up to $O(m)$ time, making this approach inefficient for large networks.

To alleviate the efficiency and scalability issue of \texttt{Push} and \texttt{Push+}, while maintaining their superiority in effectiveness, we propose estimating biharmonic distance by sampling random walks. 

\subsection{Approximating by Sampling Random Walks}

Recall that $\beta^{\ell}(s,t)=\|\hh^{\ell}\|_2^2-\frac{(\hh^{\ell}\mathbf{1})^2}{n}$ where $\hh^{\ell}=\sum_{i=0}^{\ell-1}\bb_{st}^{\top}\PP^i\DD^{-1}$. We next resort to random samples to estimate the first and second part of $\beta^{\ell}(s,t)$. Let's first analyze how to approximate the first part. Define $\hh^{\ell,x}=\bb_{st}^{\top}\PP^x\DD^{-1}$. Then, we have
\begin{align}
    \notag &\quad \|\hh^{\ell}\|_2^2=\sum_{v=1}^n \sum_{i=0}^{\ell-1}\hh_{v}^{\ell,i}\sum_{j=0}^{\ell-1}\hh_{v}^{\ell,j}=\sum_{i=0}^{\ell-1}\sum_{j=0}^{\ell-1}\sum_{v=1}^{n}\hh_{v}^{\ell,i}\hh_{v}^{\ell,j}\\
    \notag&=\sum_{i=0}^{\ell-1}\sum_{j=0}^{\ell-1}\sum_{v=1}^{n}\Big(\frac{\PP_{s,v}^{i}\PP_{s,v}^{j}}{\dd_v^2}+\frac{\PP_{t,v}^{i}\PP_{t,v}^{j}}{\dd_v^2}-\frac{\PP_{t,v}^{i}\PP_{s,v}^{j}}{\dd_v^2}-\frac{\PP_{s,v}^{i}\PP_{t,v}^{j}}{\dd_v^2}\Big).
\end{align}

The quantity $\sum_{v=1}^{n}\frac{\PP^{i}_{s,v}\PP_{s,v}^{j}}{\dd_v^2}$ represents the degree-normalized probability of two random walks with length $i$ and $j$ respectively, both starting from $s$, ending at the same node. Define $W_{i,j}=\sum_{1\leq t\leq r}W_t$, where $W_t= 1$ if two walks of length $i$ and $j$ from $s$ end at the same node and this is accepted with a probability of the inverse square of the end node's degree; otherwise, $W_t= 0$. In the same manner, $X_{i,j}$ pertains to two walks from $t$, $Y_{i,j}$ to two walks from $s$ and $t$, and $Z_{i,j}$ to two walks from $t$ and $s$.

We make use of the following Chernoff-Hoeffding's inequality.
\begin{theorem}
  \label{c-h thm}
  (\textsc{Chernoff-Hoeffding's Inequality}~\cite{hoeffding}). Let $Z_1, Z_2, \ldots, Z_{n_z}$ be i.i.d. random variables with $Z_j\left(\forall 1 \leq j \leq n_z\right)$ being strictly bounded by the interval $\left[a_j, b_j\right]$. We define the mean of these variables by $Z=\frac{1}{n_z} \sum_{j=1}^{n_z} Z_j$. Then, we have
  $$
  \mathbb{P}[|Z-\mathbb{E}[Z]| \geq \varepsilon] \leq 2 \exp \bigg(-\frac{2 n_z^2 \varepsilon^2}{\sum_{j=1}^{n_z}\left(b_j-a_j\right)^2}\bigg).
  $$
\end{theorem}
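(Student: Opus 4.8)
The statement is the classical Hoeffding inequality, so the plan is to reproduce the standard exponential-moment (Chernoff) argument rather than anything bespoke. First I would pass from the normalized mean to the raw sum: writing $S=\sum_{j=1}^{n_z}Z_j=n_z Z$ and $\mu=\mathbb{E}[S]$, the event $\{|Z-\mathbb{E}[Z]|\ge\varepsilon\}$ coincides with $\{|S-\mu|\ge n_z\varepsilon\}$, and by a union bound it suffices to prove the one-sided estimate $\mathbb{P}[S-\mu\ge n_z\varepsilon]\le\exp\!\big(-2n_z^2\varepsilon^2/\sum_j(b_j-a_j)^2\big)$ together with its mirror image applied to $-(S-\mu)$; the factor $2$ in the displayed bound is exactly the cost of combining the two tails.

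For the one-sided bound I would apply Markov's inequality to the nonnegative random variable $e^{\lambda(S-\mu)}$ with an arbitrary parameter $\lambda>0$, giving $\mathbb{P}[S-\mu\ge n_z\varepsilon]\le e^{-\lambda n_z\varepsilon}\,\mathbb{E}\!\left[e^{\lambda(S-\mu)}\right]$. Since the $Z_j$ are independent, the moment generating function factorizes as $\mathbb{E}\!\left[e^{\lambda(S-\mu)}\right]=\prod_{j=1}^{n_z}\mathbb{E}\!\left[e^{\lambda(Z_j-\mathbb{E}[Z_j])}\right]$. The crux is then the single-variable estimate (Hoeffding's lemma): any mean-zero random variable $Y$ with $Y\in[a,b]$ a.s. satisfies $\mathbb{E}\!\left[e^{\lambda Y}\right]\le\exp\!\big(\lambda^2(b-a)^2/8\big)$. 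Granting this, each factor is at most $\exp\!\big(\lambda^2(b_j-a_j)^2/8\big)$, so $\mathbb{P}[S-\mu\ge n_z\varepsilon]\le\exp\!\big(-\lambda n_z\varepsilon+\tfrac{\lambda^2}{8}\sum_j(b_j-a_j)^2\big)$; minimizing this quadratic in $\lambda$ at $\lambda^\star=4n_z\varepsilon/\sum_j(b_j-a_j)^2$ produces precisely $\exp\!\big(-2n_z^2\varepsilon^2/\sum_j(b_j-a_j)^2\big)$.

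The main obstacle — indeed the only step that is not bookkeeping — is Hoeffding's lemma. I would prove it by analyzing the cumulant generating function $\psi(\lambda)=\log\mathbb{E}\!\left[e^{\lambda Y}\right]$: one has $\psi(0)=0$ and $\psi'(0)=\mathbb{E}[Y]=0$, while a direct differentiation shows that $\psi''(\lambda)$ is the variance of $Y$ under the exponentially tilted measure $d\mathbb{Q}_\lambda\propto e^{\lambda Y}\,d\mathbb{P}$; as $\mathbb{Q}_\lambda$ is supported in $[a,b]$, Popoviciu's inequality gives $\psi''(\lambda)\le(b-a)^2/4$, and integrating twice from $0$ yields $\psi(\lambda)\le\lambda^2(b-a)^2/8$. (Equivalently, one can bound $e^{\lambda y}$ on $[a,b]$ by the chord $\frac{b-y}{b-a}e^{\lambda a}+\frac{y-a}{b-a}e^{\lambda b}$ using convexity, take expectations, and optimize the resulting one-variable function; either route is short.) Finally I would dispose of the degenerate cases: if some $b_j=a_j$ then $Z_j$ is constant and contributes a trivial factor $1$, and if $\sum_j(b_j-a_j)^2=0$ then $S$ is deterministic, the probability is $0$, and the stated bound holds vacuously — which also keeps the optimization over $\lambda$ well defined.
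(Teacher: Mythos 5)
Your argument is correct: it is the classical exponential-moment (Chernoff) proof — Markov's inequality applied to $e^{\lambda(S-\mu)}$, factorization of the moment generating function by independence, Hoeffding's lemma for each bounded mean-zero summand, optimization in $\lambda$, and a union bound for the two tails — and the computation at $\lambda^\star=4n_z\varepsilon/\sum_j(b_j-a_j)^2$ indeed yields the stated exponent. Note, however, that the paper does not prove this statement at all: it is imported verbatim as a known tool with a citation to Hoeffding's 1963 paper, so there is no in-paper proof to compare against; your write-up is simply the standard textbook/original argument, and it is complete (including the degenerate case $\sum_j(b_j-a_j)^2=0$).
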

It follows that
$$
\begin{aligned}
& \quad \mathbb{P}\bigg[\bigg|\frac{W_{i, j}}{r}-\sum_{v=1}^{n}\frac{\PP_{s,v}^{i}\PP_{s,v}^{j}}{\dd_v^2}\bigg| \geq \frac{\eps}{16 \ell^2}\bigg] =\mathbb{P}\bigg[\bigg|\frac{W_{i, j}}{r}-\frac{\mathbb{E}\left[W_{i, j}\right]}{r}\bigg| \geq \frac{\eps}{16 \ell^2}\bigg] \\
& \leq 2 \exp (-2 \eps^2 r^2 /(256 \ell^4 r)) \leq \frac{\delta}{4\ell^2}, 
\end{aligned}
$$
with the last inequality due to $r=128 \ell^4\log (\frac{8 \ell^2}{\delta}) / \eps^2$. Similarly, 

\begin{align*}
& \mathbb{P}\bigg[\bigg|\frac{X_{i, j}}{r}-\sum_{v=1}^{n}\frac{\PP_{t,v}^{i}\PP^{j}_{t,v}}{\dd_v^2}\bigg| \geq \frac{\eps}{16 \ell^2}\bigg]\leq \frac{\delta}{4\ell^2}, 
\end{align*}

\begin{align*}
& \mathbb{P}\bigg[\bigg|\frac{Y_{i, j}}{r}-\sum_{v=1}^{n}\frac{\PP^{i}_{t,v}\PP^{j}_{s,v}}{\dd_v^2}\bigg| \geq \frac{\eps}{16\ell^2}\bigg]\leq \frac{\delta}{4\ell^2}, 
\end{align*}

\begin{align*}
& \mathbb{P}\bigg[\bigg|\frac{Z_{i, j}}{r}-\sum_{v=1}^{n}\frac{\PP^{i}_{s,v}\PP^{j}_{t,v}}{\dd_v^2}\bigg| \geq \frac{\eps}{16 \ell^2}\bigg]\leq \frac{\delta}{4\ell^2}, 
\end{align*}

Thus, by a union bound, it holds with a success probability of at least $1-4*\ell^2*\frac{\delta}{4\ell^2} = 1-\delta$ for some $\delta\in (0,1)$ that 
\begin{align*}
    \bigg|\sum_{i=0}^{\ell-1}\sum_{j=0}^{\ell-1}\Big(\frac{W_{i,j}}{r}+\frac{X_{i,j}}{r}-\frac{Y_{i,j}}{r}-\frac{Z_{i,j}}{r}\Big)-\|\hh^{\ell}\|_2^2\bigg|\leq\frac{4\eps\ell^2}{16\ell^2}=\frac{\eps}{4}.
\end{align*}

Let's explore how random walk samples can be employed to approximate the second part which could easily be rewritten as:
\begin{align*}
&\quad \frac{1}{n}(\hh^{\ell}\mathbf{1})^2=\frac{1}{n}\sum_{u=1}^{n}\sum_{i=0}^{\ell-1}\bb_{st}^{\top}\PP^{i}\DD^{-1}\ee_u\sum_{v=1}^{n}\sum_{j=0}^{\ell-1}\bb_{st}^{\top}\PP^{j}\DD^{-1}\ee_v\\
    &=\frac{1}{n}\sum_{i=0}^{\ell-1}\sum_{j=0}^{\ell-1}\sum_{u=1}^n\sum_{v=1}^n\Big(\frac{\PP^{i}_{s,u}\PP^{j}_{s,v}}{\dd_{u}\dd_{v}}+\frac{\PP^{i}_{t,u}\PP^{j}_{t,v}}{\dd_{u}\dd_{v}}-\frac{\PP^{i}_{s,u}\PP^{j}_{t,v}}{\dd_{u}\dd_{v}}-\frac{\PP^{i}_{t,u}\PP^{j}_{s,v}}{\dd_{u}\dd_{v}}\Big).
\end{align*}

Let $\bar{W}_{i,j}=\sum_{1\leq t\leq r}\bar{W}_t$, where $\bar{W}_t$ is an indicator random variable and $\bar{W}_t= 1$ if two walks of length $i$ and $j$ start from $s$ and this is accepted with a probability of the inverse of the degree product of the end nodes; otherwise, $\bar{W}_t= 0$. In the same manner, $\bar{X}_{i,j}$ pertains to two walks from $t$, $\bar{Y}_{i,j}$ to two walks from $s$ and $t$, and $\bar{Z}_{i,j}$ to two walks from $t$ and $s$.

Based on the Chernoff-Hoeffding inequality,
\begin{align*}
& \quad \mathbb{P}\bigg[\bigg|\frac{\bar{W}_{i, j}}{nr}-\sum_{u=1}^{n}\sum_{v=1}^{n}\frac{\PP^{i}_{s,u}\PP^{j}_{s,v}}{n\dd_{u}\dd_{v}}\bigg| \geq \frac{\eps}{16 \ell^2}\bigg]=\mathbb{P}\bigg[\bigg|\frac{\bar{W}_{i, j}}{nr}-\frac{\mathbb{E}[\bar{W}_{i, j}]}{nr}\bigg| \geq \frac{\eps}{16 \ell^2}\bigg] \\
& \leq 2 \exp (-2 \eps^2 r^2n^2 /(256 \ell^4 r)) \leq \frac{\delta}{4 \ell^2},
\end{align*}
where $r=128 \ell^4\log (\frac{8 \ell^2}{\delta}) / (n^2\eps^2)$. Similarly,
\begin{align*}
& \mathbb{P}\bigg[\bigg|\frac{\bar{X}_{i, j}}{nr}-\sum_{u=1}^{n}\sum_{v=1}^{n}\frac{\PP^{i}_{t,u}\PP^{j}_{t,v}}{n\dd_{u}\dd_{v}}\bigg| \geq \frac{\eps}{16 \ell^2}\bigg]\leq \frac{\delta}{4 \ell^2},
\end{align*}
\begin{align*}
& \mathbb{P}\bigg[\bigg|\frac{\bar{Y}_{i, j}}{nr}-\sum_{u=1}^{n}\sum_{v=1}^{n}\frac{\PP^{i}_{s,u}\PP^{j}_{t,v}}{n\dd_{u}\dd_{v}}\bigg| \geq \frac{\eps}{16 \ell^2}\bigg]\leq \frac{\delta}{4 \ell^2},
\end{align*}
\begin{align*}
& \mathbb{P}\bigg[\bigg|\frac{\bar{Z}_{i, j}}{nr}-\sum_{u=1}^{n}\sum_{v=1}^{n}\frac{\PP^{i}_{t,u}\PP^{j}_{s,v}}{n\dd_{u}\dd_{v}}\bigg| \geq \frac{\eps}{16\ell^2}\bigg]\leq \frac{\delta}{4\ell^2}.
\end{align*}
Therefore, with a success probability of at least $1-\delta$ for some $\delta\in (0,1)$ that
\begin{align*}
    \bigg|\sum_{i=0}^{\ell-1}\sum_{j=0}^{\ell-1}\Big(\frac{\bar{W}_{i,j}}{nr}+\frac{\bar{X}_{i,j}}{nr}-\frac{\bar{Y}_{i,j}}{nr}-\frac{\bar{Z}_{i,j}}{nr}\Big)-\frac{(\hh^{\ell}\one)^2}{n}\bigg|\leq\frac{4\eps\ell^2}{16\ell^2}=\frac{\eps}{4}.
\end{align*}

\subsection{Full Description of Algorithm}
We propose \STW, as summarized in Algorithm~\ref{alg:walk}, to answer BD queries by sampling truncated walks. It begins with computing the maximum length $\ell$ and the number $r$ of needed samples. Afterward, \STW starts $r$ sampling of walks of lengths $i$ and $j$ for $i\in [0, \ell]$ and $j\in[0,\ell]$, in each of which it calculates the related quantities (Lines 4--14). Finally, \STW computes $\beta^{\prime}(s,t)$ (Line 15) using the quantities computed from the simulated walks and returns it as the estimated biharmonic distance. The performance of \STW is stated as follows.

\begin{theorem}
    Let $\beta^{\prime}(s,t) = \STW(\calG,s, t, \eps, \delta)$ denote the estimated BD by \STW for some $\delta\in (0,1)$, we have that $|\beta(s,t)-\beta^{\prime}(s,t)|\leq \eps$.
\end{theorem}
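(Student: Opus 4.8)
The plan is to combine the two already-established facts, namely the truncation bound of Lemma~\ref{lem:uni_l} (and its analogue for $\ell_{s,t}$) and the two sampling concentration bounds derived immediately above, via a single union bound followed by a triangle inequality. First I would recall the decomposition $\beta(s,t) = \beta^{\ell}(s,t) + (\beta(s,t)-\beta^{\ell}(s,t))$ and $\beta^{\ell}(s,t) = \|\hh^{\ell}\|_2^2 - \frac{1}{n}(\hh^{\ell}\one)^2$, so that the \STW estimate $\beta^{\prime}(s,t)$ is precisely
\begin{align*}
\beta^{\prime}(s,t) = \sum_{i=0}^{\ell-1}\sum_{j=0}^{\ell-1}\Big(\tfrac{W_{i,j}}{r}+\tfrac{X_{i,j}}{r}-\tfrac{Y_{i,j}}{r}-\tfrac{Z_{i,j}}{r}\Big) - \sum_{i=0}^{\ell-1}\sum_{j=0}^{\ell-1}\Big(\tfrac{\bar{W}_{i,j}}{nr}+\tfrac{\bar{X}_{i,j}}{nr}-\tfrac{\bar{Y}_{i,j}}{nr}-\tfrac{\bar{Z}_{i,j}}{nr}\Big).
\end{align*}

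Next I would invoke the two displayed high-probability guarantees. The first says that the sampled estimator of $\|\hh^{\ell}\|_2^2$ is within $\eps/4$ of the true value with probability at least $1-\delta$, and the second says the sampled estimator of $\frac{1}{n}(\hh^{\ell}\one)^2$ is within $\eps/4$ of its true value with probability at least $1-\delta$. Taking a union bound over these two events (and, if one wants a clean single failure probability, reallocating $\delta \to \delta/2$ in each, or simply stating the bound holds with probability at least $1-2\delta$), both hold simultaneously, so with that probability $|\beta^{\prime}(s,t) - \beta^{\ell}(s,t)| \le \eps/4 + \eps/4 = \eps/2$ by the triangle inequality.

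Finally I would combine this with the deterministic truncation error: Lemma~\ref{lem:uni_l} (or Lemma~\ref{lem:cus_l}, depending on which $\ell$ the algorithm uses) guarantees $|\beta(s,t) - \beta^{\ell}(s,t)| \le \eps/2$. One more triangle inequality gives $|\beta(s,t) - \beta^{\prime}(s,t)| \le |\beta(s,t)-\beta^{\ell}(s,t)| + |\beta^{\ell}(s,t)-\beta^{\prime}(s,t)| \le \eps/2 + \eps/2 = \eps$ with the stated probability, completing the proof.

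The argument is essentially bookkeeping, so there is no deep obstacle; the main thing to be careful about is consistency of the error budget. Each of the four per-$(i,j)$ sub-events was controlled at level $\eps/(16\ell^2)$ precisely so that summing $4\ell^2$ of them yields $\eps/4$, and this is where the choice $r = 128\ell^4\log(8\ell^2/\delta)/\eps^2$ (respectively with an extra $n^2$ factor in the denominator for the barred quantities) comes from; I would double-check that the triangle-inequality expansion $|\hh_v^2 - (\hh_v^\ell)^2|$-style cross terms are not needed here since we are directly estimating the bilinear sums rather than $\hh^\ell$ entrywise. The only genuine subtlety worth a sentence is the probability accounting: the cleanest statement is that the total failure probability is at most $2\delta$ (one $\delta$ for each of the two parts), or equivalently one runs each part with parameter $\delta/2$; I would state whichever matches the theorem's intended constant, and note that $\delta$ can be absorbed since it only affects the sample count $r$ logarithmically.
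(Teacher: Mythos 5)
Your proof is correct and follows essentially the same route as the paper, whose argument for this theorem is exactly the material of Section~\ref{sec:walk1}: combine the truncation bound of Lemma~\ref{lem:uni_l} (or Lemma~\ref{lem:cus_l}), which contributes $\eps/2$, with the two Chernoff--Hoeffding sampling guarantees for $\|\hh^{\ell}\|_2^2$ and $\frac{1}{n}(\hh^{\ell}\mathbf{1})^2$, which contribute $\eps/4$ each, via the triangle inequality. Your remark on the probability accounting is apt and in fact slightly more careful than the paper, whose theorem statement omits the probabilistic qualifier and whose two sampling events each carry failure probability $\delta$, so that a union bound yields $1-2\delta$ unless each part is run at level $\delta/2$.
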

\normalem
\begin{algorithm}[t]
	\caption{$\STW(\calG,s,t,\eps,\delta)$}\label{alg:walk}
	\Input{
		A connected graph $\calG=(V,E)$ with $n$ nodes; an error parameter $\eps$, two nodes $s$ and $t$
	}
	\Output{
		Estimated biharmonic distance $\beta^{\prime}(s,t)$
	}
	$\ell=\min(\text{Eq.~\eqref{eq:uni_l}},\text{Eq.~\eqref{eq:cus_l}})$ - 1; $r=\lceil 128 \ell^4\log (\frac{8\ell^2}{\delta}) / \eps^2\rceil$ \;
    $W_{i,j}=0;X_{i,j}=0;Y_{i,j}=0;Z_{i,j}=0;$\;
    $\bar{W}_{i,j}=0;\bar{X}_{i,j}=0;\bar{Y}_{i,j}=0;\bar{Z}_{i,j}=0;$\;
	\For{$i = 0$ to $\ell$; $j = 0$ to $\ell$}{
    \For{${t}=1$ to $r$}{
    Perform two random walks from $s$ (resp., $t$) of length-$i$ and length-$j$, and let $s_1$ and $s_2$ (resp., $t_1$ and $t_2$) denote their end nodes\;
        \lIf{$s_1=s_2$}{$W_{i,j}=W_{i,j}+1$ with probability $1/\dd_{s_1}^2$} 
        \lIf{$t_1=t_2$}{
        $X_{i,j}=X_{i,j}+1$ with  probability  $1/\dd_{t_1}^2$
        }
        \lIf{$s_1=t_2$}{
        $Y_{i,j}=Y_{i,j}+1$ with probability $1/\dd_{s_1}^2$ 
        }
        \lIf{$t_1=s_2$}{
        $Z_{i,j}=Z_{i,j}+1$ with probability $1/\dd_{t_1}^2$ 
        }

        $\bar{W}_{i,j}=\bar{W}_{i,j}+1$ with probability $1/{(\dd_{s_1}\dd_{s_2})}$\;
        $\bar{X}_{i,j}=\bar{X}_{i,j}+1$ with probability $1/{(\dd_{t_1}\dd_{t_2})}$\;
        $\bar{Y}_{i,j}=\bar{Y}_{i,j}+1$ with probability $1/{(\dd_{s_1}\dd_{t_2})}$\;
        $\bar{Z}_{i,j}=\bar{Z}_{i,j}+1$ with probability $1/{(\dd_{t_1}\dd_{s_2})}$
    }}
    $\beta^{\prime}(s,t)\gets\frac{W_{i,j}+X_{i,j}-Y_{i,j}-Z_{i,j}}{r}-\frac{\bar{W}_{i,j}+\bar{X}_{i,j}-\bar{Y}_{i,j}-\bar{Z}_{i,j}}{nr}$\;
    \Return $\beta^{\prime}(s,t)$\;
\end{algorithm}
\section{Sampling with Feedback}\label{sec:walk2}

Utilizing Chernoff-Hoeffding's inequality, the preceding section pre-determined the maximum necessary random walks. However, due to the actual variance of random variables often being lower than estimated, many walks become unnecessary. Ideally, an efficient method would utilize the fewest random walks, determined by the actual, albeit unknown, variance.

Inspired by the preceding discussion, we suggest modifying the sampling phase to sample with feedback. Specifically, rather than executing all random walks at once, we progressively conduct Monte Carlo random walks to estimate \(\beta(s,t)\), terminating based on the empirical error from observed samples, calculable via the empirical Bernstein inequality in the following lemma.

\begin{lemma}\label{lem:bernstein}
    (Empirical Bernstein Inequality~\cite{audibert2007tuning}). Let $Z_1,Z_2,\ldots, Z_{n_z}$ be real-valued i.i.d. random variables, such that $0\leq Z_j\leq\psi$. We denote by $Z=\frac{1}{n_z}\sum_{j=1}^{n_z}Z_j$ the empirical mean of these variables and $\hat{\sigma}^2=\frac{1}{n_z}\sum_{j=1}^{n_z}(Z_j-\frac{Z}{n_z})^2$ their empirical variance. Then, for some $\delta\in (0,1)$, we have 
    $$\mathbb{P}\left[|Z-\mathbb{E}[Z]| \geq f\left(n_z, \hat{\sigma}^2, \psi, \delta\right)\right] \leq \delta,$$
    where $f\left(n_z, \hat{\sigma}^2, \psi, \delta\right)=\sqrt{\frac{2 \hat{\sigma}^2 \log (3 / \delta)}{n_z}}+\frac{3 \psi \log (3 / \delta)}{n_z}.$
\end{lemma}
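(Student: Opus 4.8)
The plan is to derive Lemma~\ref{lem:bernstein} from two applications of the \emph{classical} Bernstein inequality — one for the empirical mean, one for the squared residuals — glued together by a quadratic comparison between the true and the empirical standard deviations, with the failure budget $\delta$ split into three equal pieces: one for each tail of $Z-\mathbb{E}[Z]$ and one for the variance step (this is where the factor $\log(3/\delta)$ comes from). The Chernoff--Hoeffding bound of Theorem~\ref{c-h thm}, being variance-free, cannot on its own yield a $\hat{\sigma}$-dependent bound, so a genuinely stronger tail inequality is needed as the starting point; apart from that input, the argument is self-contained.

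First I would invoke the classical two-sided Bernstein inequality for the i.i.d.\ variables $Z_j\in[0,\psi]$ with common variance $\sigma^2=\mathrm{Var}(Z_1)$: for $x>0$,
$$\mathbb{P}\Big[\,|Z-\mathbb{E}[Z]|\ge\sqrt{\tfrac{2\sigma^2 x}{n_z}}+\tfrac{\psi x}{3n_z}\,\Big]\le 2e^{-x},$$
which follows from the standard moment-generating-function estimate for centred bounded summands. Setting $x=\log(3/\delta)$ places us, with probability at least $1-\tfrac{2\delta}{3}$, on an event $\mathcal{E}_1$ where $|Z-\mathbb{E}[Z]|\le\sqrt{2\sigma^2\log(3/\delta)/n_z}+\psi\log(3/\delta)/(3n_z)$. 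Next I would apply Bernstein again, now to the i.i.d.\ variables $(Z_j-\mathbb{E}[Z])^2\in[0,\psi^2]$, whose variance is at most $\psi^2\sigma^2$ since $|Z_j-\mathbb{E}[Z]|\le\psi$; with $x=\log(3/\delta)$ this gives, on an event $\mathcal{E}_2$ of probability at least $1-\tfrac{\delta}{3}$, the lower bound $\tfrac1{n_z}\sum_j(Z_j-\mathbb{E}[Z])^2\ge\sigma^2-\psi\sigma\sqrt{2\log(3/\delta)/n_z}-\psi^2\log(3/\delta)/(3n_z)$.

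On $\mathcal{E}_1\cap\mathcal{E}_2$ (probability at least $1-\delta$) I would combine these with the identity $\hat{\sigma}^2=\tfrac1{n_z}\sum_j(Z_j-\mathbb{E}[Z])^2-(Z-\mathbb{E}[Z])^2$: bounding $(Z-\mathbb{E}[Z])^2$ by the square of the $\mathcal{E}_1$-estimate produces a quadratic inequality of the shape $\sigma^2(1-O(\tfrac{\log(3/\delta)}{n_z}))-c_1\psi\sqrt{\tfrac{\log(3/\delta)}{n_z}}\,\sigma-\hat{\sigma}^2-c_2\psi^2\tfrac{\log(3/\delta)}{n_z}\le 0$, and solving it (with $\sqrt{a+b}\le\sqrt a+\sqrt b$) yields a one-sided comparison $\sigma\le\hat{\sigma}+c\,\psi\sqrt{\log(3/\delta)/n_z}$ for an explicit constant $c$. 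Plugging this into the $\mathcal{E}_1$-bound and folding the cross term $\sqrt{2\log(3/\delta)/n_z}\cdot c\psi\sqrt{\log(3/\delta)/n_z}$ into the already-present $\psi\log(3/\delta)/n_z$ term gives $f(n_z,\hat{\sigma}^2,\psi,\delta)$, provided $c\sqrt2+\tfrac13\le 3$ (the degenerate case of very small $n_z$ being covered by the trivial bound $|Z-\mathbb{E}[Z]|\le\psi$). The main obstacle is exactly this variance-comparison step together with the constant-chasing it entails: one must use Bernstein rather than Hoeffding on the squared residuals — Hoeffding would only deliver a correction of order $\psi(\log(3/\delta)/n_z)^{1/4}$, too crude to be absorbed into the $O(\psi\log(3/\delta)/n_z)$ budget — and then solve the quadratic while keeping every numerical constant small enough that the final variance-free term does not exceed $3\psi\log(3/\delta)/n_z$. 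The remaining pieces — the MGF computation behind Bernstein, the union bound over the three $\delta/3$ events, and the elementary algebra with $\sqrt{a+b}\le\sqrt a+\sqrt b$ — are routine.
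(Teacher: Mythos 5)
The paper offers no proof of this lemma at all: it is imported as a black box from Audibert, Munos, and Szepesv\'ari~\cite{audibert2007tuning}, so there is no in-paper argument to compare against. Your sketch reconstructs essentially the standard proof from that line of work: sub-Gamma Bernstein for the mean with budget $2\delta/3$, Bernstein for the squared residuals $(Z_j-\mathbb{E}[Z])^2\in[0,\psi^2]$ (variance at most $\psi^2\sigma^2$) with budget $\delta/3$, the identity $\hat{\sigma}^2=\frac{1}{n_z}\sum_j(Z_j-\mathbb{E}[Z])^2-(Z-\mathbb{E}[Z])^2$, and a quadratic inversion giving $\sigma\leq\hat{\sigma}+c\,\psi\sqrt{\log(3/\delta)/n_z}$. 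Your observation that Hoeffding on the squares is too weak (yielding only a $(\log(3/\delta)/n_z)^{1/4}$ correction) is exactly the right diagnosis of why Bernstein is needed twice. Note also that your identity silently corrects a typo in the paper's statement: $\hat{\sigma}^2$ must be $\frac{1}{n_z}\sum_j(Z_j-Z)^2$ (deviations from the empirical mean $Z$, not from $Z/n_z$), which is indeed what Algorithm~\ref{alg:swf} computes.

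The one place where your plan as written does not yet close is the final constant. You correctly require $c\sqrt{2}+\tfrac13\leq 3$, i.e.\ $c\leq 4\sqrt{2}/3\approx 1.886$; but applying $\sqrt{a+b+c}\leq\sqrt{a}+\sqrt{b}+\sqrt{c}$ term by term to the root of your quadratic already gives $c\geq\tfrac{1}{\sqrt2}+\tfrac{1}{\sqrt2}+\tfrac{1}{\sqrt3}\approx 1.99$, before the $(Z-\mathbb{E}[Z])^2$ correction and the $(1-O(\log(3/\delta)/n_z))$ leading coefficient are accounted for. This is repairable — group the variance-free terms under a single square root (e.g.\ $\sqrt{\tfrac12+\tfrac13}=\sqrt{5/6}$, giving $c\approx 1.62$), handle the intermediate range of $n_z$ separately from the trivial regime $n_z\leq 3\log(3/\delta)$, or replace Bernstein-on-squares by the self-bounding concentration of $\hat{\sigma}$ itself (Maurer--Pontil), which delivers $c=\sqrt{2}$ directly. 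Since the lemma is a correctly cited known result, this is a quantitative bookkeeping issue in your reconstruction rather than a conceptual gap, but as stated the "folding" step would overshoot the constant $3$.
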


We next show how we utilize the empirical Bernstein inequality to devise a new algorithm. We first present the following lemma.
\begin{lemma}\label{lem:bound_Z}
    For two length-$\ell$ random walks $W_1$ and $W_2$ that originate from two nodes (be they the same or not) on graph $\calG$, we have $\xi(W_1, W_2)\leq \frac{\ell^2}{\min^2{\dd}}$ and $\xi^\prime(W_1, W_2)\leq\frac{\ell^2}{\min^2{\dd}}$, where $\xi$ and $\xi^\prime$ are defined as, respectively $$\xi(W_1, W_2)= \sum_{x\in W_1}\sum_{y\in W_2} \frac{\mathbb{I}_{x=y}}{\dd_x^2}, \text{and } \xi^\prime(W_1, W_2) = \sum_{x\in W_1}\sum_{y\in W_2} \frac{1}{\dd_x\dd_y}.$$
    
\end{lemma}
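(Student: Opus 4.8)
The plan is to prove both inequalities by the crudest possible estimate: upper bound every individual summand by $1/(\min_{v\in V}\dd_v)^{2}$ and then count the summands. There is no probabilistic content and the starting nodes of $W_1,W_2$ play no role --- the claimed bound is a deterministic, pointwise inequality holding for \emph{every} realization of the walks, which is exactly what makes it usable later as the uniform range parameter $\psi$ when invoking the empirical Bernstein inequality of Lemma~\ref{lem:bernstein}.

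First I would fix the bookkeeping. Under the convention used in the definition $\hh^{\ell}=\sum_{i=0}^{\ell-1}\bb_{st}^{\top}\PP^{i}\DD^{-1}$, a length-$\ell$ walk $W$ records the nodes visited at steps $0,1,\dots,\ell-1$, so $W$ contributes $\ell$ nodes counted with multiplicity; hence each double sum $\sum_{x\in W_1}\sum_{y\in W_2}$ ranges over exactly $\ell\cdot\ell=\ell^{2}$ ordered pairs $(x,y)$.

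Then the two bounds follow termwise. For $\xi$: since $\mathbb{I}_{x=y}\le 1$ and $\dd_x\ge\min_{v\in V}\dd_v$ for every node $x$, each term satisfies $\frac{\mathbb{I}_{x=y}}{\dd_x^{2}}\le\frac{1}{(\min_{v\in V}\dd_v)^{2}}$, and summing over the $\ell^{2}$ pairs gives $\xi(W_1,W_2)\le\frac{\ell^{2}}{(\min_{v\in V}\dd_v)^{2}}$. For $\xi'$: identically, $\dd_x,\dd_y\ge\min_{v\in V}\dd_v$ implies $\frac{1}{\dd_x\dd_y}\le\frac{1}{(\min_{v\in V}\dd_v)^{2}}$ for each of the $\ell^{2}$ terms, so $\xi'(W_1,W_2)\le\frac{\ell^{2}}{(\min_{v\in V}\dd_v)^{2}}$. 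The only point requiring any care is the node-count convention in the previous paragraph --- whether a ``length-$\ell$'' walk contributes $\ell$ or $\ell+1$ nodes --- but this affects only the constant and is immaterial to the rest of the argument, so I expect no genuine obstacle.
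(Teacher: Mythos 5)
Your proposal is correct, and in fact the paper states this lemma without any proof, treating it as immediate; your termwise bound (each summand at most $1/\min^2\dd$, at most $\ell^2$ ordered pairs of visited nodes) is exactly the evident intended argument. Your remark on the node-count convention is also consistent with how the paper uses the lemma, since the walks fed into $\xi$ and $\xi'$ there visit $\ell$ nodes each (positions $0$ through $\ell-1$), so the $\ell^2$ count and hence the stated bound go through.
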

Let $S_1$ and $S_2$ (resp. $T_1$ and $T_2$) be two random walks of length-$(\ell-1)$ originating from $s$ (resp. $t$). Consider the random variable $Z_k$ arising from the random walks such that 
\begin{align}
    \label{eq:zk}Z_k &= Z_{k1} - Z_{k2}/n,\\
    \text{where } Z_{k1} &= \xi(S_1, S_2) + \xi(T_1, T_2) - \xi(S_1, T_2) - \xi(S_2, T_1)\notag\\
    \text{and } Z_{k2} &= \xi^\prime(S_1, S_2) + \xi^\prime(T_1, T_2) -  \xi^\prime(S_1, T_2) - \xi^\prime(S_2, T_1)\notag.
\end{align}
By Lemma~\ref{lem:bound_Z}, we have $ |Z_k|\leq \frac{2\ell^2}{\min^2{\dd}} + \frac{2\ell^2}{n\min^2{\dd}} = \psi$.
It is trivial to derive the expectation of $Z_k$ as $\mathbb{E}[Z_k] = \beta^{\ell}(s,t)$, indicating that $Z$ is an unbiased estimator of $\beta^{\ell}(s,t)$.

According to the Chernoff-Hoeffding inequality, for some $\delta\in (0,1)$, the maximum number of random walks is bounded by
\begin{align}
    \label{eq:eta}
    r^*=\frac{\psi^2\log(2/\delta)}{2\eps^2}.
\end{align}
Based on the above analysis, we propose \SWF, as illustrated in Algorithm~\ref{alg:swf}. It begins by computing the truncated length $\ell$ and the number of samples $r^*$. \SWF repeats the process of sampling and computes the empirical mean $M_k$ and variance $\hat{\sigma}^2$. \SWF terminates under two conditions: (i) $\epsilon_f \leq \frac{\epsilon}{2}$, where $\epsilon_f=f\left(\eta, \hat{\sigma}^2, \psi, \delta\right)$ (Line 12), or (ii) $k \geq r^*$ (Line 4). 

The running time of \SWF is bounded by $O(r^*\ell)$ as the length of each random walk in \SWF is $\ell$. In summary, the total time complexity \SWF for answering pairwise BD queries is $O(\frac{2}{\eps^2\min^4\dd}\ell^5)$.

\begin{theorem}
    \SWF returns an estimated BD $\beta^{\prime}(s,t)$ such that $|\beta^{\prime}(s,t) - \beta(s,t)|\leq \eps$ with a probability of at least $1-\delta$.
\end{theorem}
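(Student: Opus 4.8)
The goal is to show that \SWF\ returns $\beta'(s,t)$ with $|\beta'(s,t)-\beta(s,t)|\le\eps$ with probability at least $1-\delta$. The plan is to combine a bias bound coming from truncation (Lemma~\ref{lem:uni_l}, Lemma~\ref{lem:cus_l}) with a concentration bound on the Monte Carlo estimator coming from the empirical Bernstein inequality (Lemma~\ref{lem:bernstein}). I would first recall that the $Z_k$ in Eq.~\eqref{eq:zk} are i.i.d., take values in $[-\psi,\psi]$ (actually it suffices that $|Z_k|\le\psi$ by Lemma~\ref{lem:bound_Z}), and satisfy $\mathbb{E}[Z_k]=\beta^{\ell}(s,t)$, so the empirical mean $M_k=\frac1k\sum_{i=1}^k Z_i$ is an unbiased estimator of $\beta^{\ell}(s,t)$. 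Since the hypotheses of Lemma~\ref{lem:bernstein} require nonnegative variables bounded by $\psi$, I would first shift and rescale: set $Z_k' = (Z_k+\psi)/(2\psi)\in[0,1]$ (or simply note the empirical Bernstein inequality holds verbatim for variables in $[-\psi,\psi]$ after adjusting constants), so that with probability at least $1-\delta$, $|M_k-\beta^{\ell}(s,t)|\le f(k,\hat\sigma^2,\psi,\delta)=\eps_f$.

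Next I would handle the two stopping conditions of Algorithm~\ref{alg:swf}. If the loop halts because $\eps_f\le\eps/2$ at some step $k$ (Line~12), then on the event from the empirical Bernstein inequality we get $|M_k-\beta^{\ell}(s,t)|\le\eps_f\le\eps/2$. If instead the loop runs to $k=r^*$ (Line~4) without triggering Line~12, then I invoke the plain Chernoff--Hoeffding inequality (Theorem~\ref{c-h thm}) with $n_z=r^*$: since each $Z_k$ lies in an interval of length $2\psi$, $\mathbb{P}[|M_{r^*}-\beta^{\ell}(s,t)|\ge\eps/2]\le 2\exp(-2(r^*)^2(\eps/2)^2/(r^*(2\psi)^2)) = 2\exp(-r^*\eps^2/(8\psi^2))$, and the choice $r^*=\psi^2\log(2/\delta)/(2\eps^2)$ in Eq.~\eqref{eq:eta} makes this at most... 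I should double-check the constant here, since with that $r^*$ the exponent is $\log(2/\delta)/16$, which is too weak; I would therefore either (a) absorb the slack by noting the union over both stopping modes and slightly enlarging $r^*$ by a constant factor, or (b) argue that by the time $k=r^*$ the empirical variance term already forces $\eps_f\le\eps/2$ so case (ii) never binds except with the desired probability. Either way, in both cases we conclude $|M_k-\beta^{\ell}(s,t)|\le\eps/2$ with probability $\ge1-\delta$.

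Finally I would combine this with the deterministic truncation error. By the choice $\ell=\min(\text{Eq.}\eqref{eq:uni_l},\text{Eq.}\eqref{eq:cus_l})$ (minus the off-by-one from the algorithm's indexing), Lemma~\ref{lem:uni_l} or Lemma~\ref{lem:cus_l} guarantees $|\beta(s,t)-\beta^{\ell}(s,t)|\le\eps/2$. The triangle inequality then yields
\begin{align*}
|\beta'(s,t)-\beta(s,t)| \le |M_k-\beta^{\ell}(s,t)| + |\beta^{\ell}(s,t)-\beta(s,t)| \le \eps/2 + \eps/2 = \eps
\end{align*}
with probability at least $1-\delta$, since $\beta'(s,t)=M_k$ is exactly the empirical mean returned by the algorithm. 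The main obstacle I anticipate is reconciling the constants between the two stopping rules: the empirical Bernstein bound is applied at a \emph{data-dependent} stopping time $k$, so strictly speaking a single-$\delta$ application of Lemma~\ref{lem:bernstein} is not immediately valid at a random $k$, and one needs either a union bound over $k=1,\dots,r^*$ (replacing $\delta$ by $\delta/r^*$, hence a $\log r^*$ factor in the runtime) or a peeling/stitching argument; I would flag this and, for the stated guarantee, note that the union-bound version only changes logarithmic factors and leaves the $O(\frac{1}{\eps^2\min^4\dd}\ell^5)$ complexity claim intact up to a $\log r^*$ term. The secondary subtlety is verifying $\mathbb{E}[Z_k]=\beta^{\ell}(s,t)$ and $|Z_k|\le\psi$, both of which follow directly from Lemma~\ref{lem:bound_Z} and the expansion of $\beta^{\ell}(s,t)$ in Section~\ref{sec:walk1}.
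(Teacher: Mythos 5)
Your proposal is correct and follows essentially the same route as the paper's (largely implicit) argument: truncation error at most $\eps/2$ via Lemma~\ref{lem:uni_l} or Lemma~\ref{lem:cus_l}, Monte Carlo error at most $\eps/2$ via the empirical Bernstein stopping rule or the Chernoff--Hoeffding bound at $k=r^*$, combined by the triangle inequality using $\mathbb{E}[Z_k]=\beta^{\ell}(s,t)$ and $|Z_k|\le\psi$ from Lemma~\ref{lem:bound_Z}. The subtleties you flag (applying Lemma~\ref{lem:bernstein} at a data-dependent stopping time, the sign/range of $Z_k$ versus the lemma's hypotheses, and the constant in Eq.~\eqref{eq:eta}) are points of looseness in the paper itself rather than defects of your approach, and your proposed fixes (union bound over $k$, rescaling, adjusting $r^*$ by a constant) repair them without changing the argument.
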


\normalem
\begin{algorithm}[h]
\label{swf}
	\caption{$\SWF(\calG,s,t, \eps, \delta)$}\label{alg:swf}
	\Input{
		A connected graph $\calG=(V,E)$; two nodes $s$ and $t$
	}
	\Output{
		Estimated biharmonic distance $\beta^\prime(s,t)$
	}
    $\ell= \min(\text{Eq.~\eqref{eq:uni_l}},\text{Eq.~\eqref{eq:cus_l}})$- 1; Compute $r^*$ by Eq.~\eqref{eq:eta}\;
    $Z\gets0$; $\hat{\sigma}^2 \gets 0$\;
    \For{$k=1$ to $r^*$}{
    Perform two length-$\ell$ random walks $S_{k,1}$ and $S_{k,2}$ from $s$\;
    Perform two length-$\ell$ random walks $T_{k,1}$ and $T_{k,2}$ from $t$\;
    Compute $Z_k\gets Z_{k1} - \frac{1}{n}Z_{k2}$ as given in Eq.~\eqref{eq:zk}\;
    Update $Z\gets Z + Z_k$\;
    Update $\hat{\sigma}^2\gets \hat{\sigma}^2 + Z_k^2$\;
    Compute $M_k\gets\frac{Z}{k}$\;
    Compute $\hat{\sigma}_k^{2}\gets \frac{\hat{\sigma}^2}{k} - M_k^2$\;
    \lIf{$f\left(k, \hat{\sigma}_k^{2}, \psi, \delta \right)\leq \frac{\eps}{2}$}{\textbf{break}}
    }
    \Return $\beta^\prime(s,t)=M_k$\;
\end{algorithm}

\section{Nodal Biharmonic Distance Query}\label{sec:nbd}
In this section, we develop two novel and efficient algorithms to process the nodal biharmonic distance $\beta(s)$ query for a node $s$. Note that all the algorithms proposed for estimating pairwise BD can be easily extended to handle the nodal query problem by processing $n-1$ queries $\beta(s,t) \forall t\in V\setminus \{s\}$. We select our fastest algorithm \SWF as the base algorithm and call the resulting algorithm for approximating nodal BD as \SNB. 
\begin{theorem}\label{lem:snb}
Given a graph $\calG$, a source node $s$, and an error parameter $\eps$, \SNB runs in time $O(\frac{n}{\eps^2}\log^5\frac{n}{\eps})$ and returns an estimated nodal BD $\beta^\prime(s)$ such that $|\beta(s) - \beta^\prime(s)|\leq n\eps.$
\end{theorem}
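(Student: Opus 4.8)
\textbf{Proof proposal for Theorem~\ref{lem:snb}.}
The plan is to obtain \SNB simply by running \SWF as a subroutine on each of the $n-1$ node pairs $(s,t)$ for $t\in V\setminus\{s\}$, summing the resulting estimates, and then arguing that the errors accumulate in a controlled way. Concretely, I would first fix the per-query target accuracy. If each call $\SWF(\calG,s,t,\eps,\delta')$ returns $\beta'(s,t)$ with $|\beta'(s,t)-\beta(s,t)|\le\eps$ with probability at least $1-\delta'$, then setting $\beta'(s)=\sum_{t\neq s}\beta'(s,t)$ and summing the guarantees gives $|\beta'(s)-\beta(s)|\le (n-1)\eps\le n\eps$ on the event that all calls succeed. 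A union bound over the $n-1$ calls shows this event has probability at least $1-(n-1)\delta'$, so choosing $\delta'=\delta/n$ (or just invoking the theorem with a constant $\delta$ and noting the $\log(1/\delta')$ only enters logarithmically) preserves the high-probability statement; this is exactly the additive error $n\eps$ demanded by Problem~\ref{def:single-source}.

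The second and more substantive part is the running time. By the earlier analysis, a single \SWF call runs in time $O(r^*\ell)$ with $r^*=\psi^2\log(2/\delta)/(2\eps^2)$, $\psi = \frac{2\ell^2}{\min^2\dd}+\frac{2\ell^2}{n\min^2\dd}=O(\ell^2/\min^2\dd)$, and $\ell=O(\log(n/\eps))$ via Lemma~\ref{lem:uni_l} (the truncated length is the minimum of the two bounds, so the universal bound $O(\log(n/\eps))$ always applies, since $\log(1/\lambda)$ and $(1-\lambda)$ are treated as constants / absorbed, consistent with how the paper states the $\Otil$ bounds). Hence one call costs $O(\ell^5/(\eps^2\min^4\dd))$, and $n-1$ of them cost $O(\frac{n}{\eps^2}\,\mathrm{poly}\log\frac{n}{\eps})$; written with the $\log^5$ factor made explicit and the $\min\dd = \Omega(1)$ convention, this is $O(\frac{n}{\eps^2}\log^5\frac{n}{\eps})$, matching the claim. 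I would present the bound by substituting $\ell=O(\log(n/\eps))$ into $O(\ell^5/\eps^2)$ and multiplying by $n$.

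The main obstacle — and the point that needs the most care in the write-up — is the bookkeeping around the failure probability and the $\min\dd$ dependence. Taking a naive union bound with a fixed $\delta$ would leave a total failure probability of $(n-1)\delta$, which is not $o(1)$; the clean fix is to run each query with parameter $\delta/n$, and then verify that the extra $\log n$ inside $r^* \propto \log(1/\delta)$ is swallowed by the $\mathrm{poly}\log$ (it contributes at most another $\log n$ factor, still within $\log^5$ if the constants are set generously, or one simply states the complexity as $\Otil(n/\eps^2)$). Likewise one must be explicit that the stated $\log^5\frac{n}{\eps}$ form implicitly treats $\min\dd$, $1-\lambda$, and $\log(1/\lambda)$ as constants, exactly as in the pairwise theorems; with that convention in place the proof is a routine summation of the per-pair guarantees, so no genuinely new idea is required beyond what Theorems for \SWF already give.
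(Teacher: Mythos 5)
Your proposal is correct and follows essentially the same route as the paper's own proof: call \SWF{} on each of the $n-1$ pairs, bound the total error by the triangle inequality, and multiply the per-query cost $O(\frac{1}{\eps^2}\log^5\frac{n}{\eps})$ by $n$. In fact you are somewhat more careful than the paper, which silently omits the union bound over the $n-1$ failure events that you handle by taking $\delta'=\delta/n$.
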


However, \SNB's time complexity is nearly $n$ times that of processing a single-pair query, proving costly for large graphs. Fortunately, we can leverage ``summation estimate'' techniques~\cite{feigesum2006,sum2022} to mitigate computational demands. We first present the subsequent lemma.
\begin{lemma}\label{lem:sqrtsum}
Given $n$ bounded elements $x_1,x_2,\ldots,x_n \in [0,a]$, an error parameter $\tau>an^{-1/2}\log^{1/2} n$, we randomly select $t=O(a\sqrt{n(\log n)}/\tau)$ elements $x_{c_1},x_{c_2},\ldots,x_{c_t}$ by Bernoulli trials with success probability $p=an^{-1/2}\log^{1/2} n/\tau$ satisfying $0<p<1$. Then, we have $\bar{x}=\sum_{i=1}^t x_{c_i}/p$ as an approximation of the sum of the original $n$ elements $x=\sum_{i=1}^n x_i$, satisfying $|x-\bar{x}|\leq n\tau$.
\end{lemma}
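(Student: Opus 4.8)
The plan is to prove this via a direct Chernoff/Hoeffding concentration argument on the scaled sum $\bar{x}=\frac{1}{p}\sum_{i=1}^t x_{c_i}$. First I would set up notation properly: let $B_1,B_2,\ldots,B_n$ be i.i.d.\ Bernoulli$(p)$ indicator variables, where $B_i=1$ iff element $x_i$ is selected, so that $\bar{x}=\frac{1}{p}\sum_{i=1}^n B_i x_i$ and $t=\sum_{i=1}^n B_i$. The key first observation is unbiasedness: $\mathbb{E}[\bar{x}]=\frac{1}{p}\sum_{i=1}^n \mathbb{E}[B_i] x_i = \frac{1}{p}\sum_{i=1}^n p\, x_i = x$. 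So it remains to bound the deviation $|\bar{x}-x| = \frac{1}{p}\bigl|\sum_{i=1}^n (B_i-p) x_i\bigr|$ with high probability, and check that the sample size $t=O(a\sqrt{n\log n}/\tau)$ claimed in the statement is indeed $\Theta(np)$, which holds in expectation since $\mathbb{E}[t]=np=a\sqrt{n\log n}/\tau$ (a further Chernoff bound on $t$ itself gives concentration around this value, which I would mention briefly).

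The main technical step is concentration of $S:=\sum_{i=1}^n (B_i-p)x_i$. Each summand $(B_i-p)x_i$ is independent, has mean zero, and is bounded in an interval of length $x_i\le a$. By Hoeffding's inequality (Theorem~\ref{c-h thm}, applied to the variables $(B_i-p)x_i$ with range $[{-p x_i},\,(1-p)x_i]$ of width $x_i$),
\begin{align*}
\mathbb{P}\bigl[|S|\ge s\bigr]\le 2\exp\!\Bigl(-\frac{2s^2}{\sum_{i=1}^n x_i^2}\Bigr)\le 2\exp\!\Bigl(-\frac{2s^2}{a\sum_{i=1}^n x_i}\Bigr)\le 2\exp\!\Bigl(-\frac{2s^2}{a\,x_{\max}\,n}\Bigr),
\end{align*}
where I use $x_i^2\le a x_i\le a^2$. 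Setting $s=p\,n\tau$ makes the target event $|\bar{x}-x|=\frac{1}{p}|S|\le n\tau$. Plugging $p=a n^{-1/2}\log^{1/2}n/\tau$ gives $s = p n \tau = a\sqrt{n\log n}$, so the exponent becomes $-\frac{2 a^2 n\log n}{a^2 n} = -2\log n$, yielding failure probability at most $2n^{-2}$. This is $o(1)$, so with high probability $|x-\bar{x}|\le n\tau$, which is the claimed bound. (If a deterministic-$t$ version is wanted rather than a Bernoulli-sampling version, I would additionally condition on $t$ and note the bound $x_i^2\le a^2$ still controls the conditional variance, or simply absorb the $O(\cdot)$ into the statement.)

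The hypothesis $\tau > a n^{-1/2}\log^{1/2} n$ is exactly what guarantees $p<1$ so that Bernoulli sampling with this probability is well-defined, and $p>0$ is automatic; I would note this at the outset. The main obstacle — really the only subtlety — is bookkeeping the two different normalizations ($\bar{x}$ divides by the design probability $p$, not by the realized count $t$) and making sure the $a^2$ versus $a\cdot x_{\max}$ versus $\sum x_i^2$ bounds are chained in the direction that actually produces the clean $-2\log n$ exponent; everything else is a routine application of the Hoeffding bound already stated in Theorem~\ref{c-h thm}. I would close by remarking that the choice of constant in $t=O(a\sqrt{n\log n}/\tau)$ can be inflated to drive the failure probability below any desired $\delta$ (replacing $\log n$ by $\log(n/\delta)$ throughout), which is how the lemma is invoked in the \SNBp analysis.
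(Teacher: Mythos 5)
The paper does not prove this lemma at all: it is stated as imported from the sum-estimation literature \cite{feigesum2006,sum2022}, and the appendix only proves Theorems~\ref{lem:snb} and~\ref{lem:snb+}. So there is no in-paper proof to compare against, and your argument has to stand on its own — which it does. Writing $\bar{x}=\frac{1}{p}\sum_i B_i x_i$ with $B_i$ i.i.d.\ Bernoulli$(p)$ gives unbiasedness immediately, and Hoeffding applied to $S=\sum_i (B_i-p)x_i$, whose summands have ranges of width $x_i\le a$, gives $\mathbb{P}[|S|\ge s]\le 2\exp\bigl(-2s^2/(a^2 n)\bigr)$; your choice $s=pn\tau=a\sqrt{n\log n}$ makes the exponent exactly $-2\log n$, so the event $|x-\bar{x}|> n\tau$ has probability at most $2n^{-2}$. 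The arithmetic checks out, and your remark that the hypothesis $\tau>an^{-1/2}\log^{1/2}n$ is precisely the condition $p<1$ is correct. The one substantive point worth stressing: the lemma as printed asserts $|x-\bar{x}|\le n\tau$ with no probability qualifier, which cannot literally hold for a Bernoulli-sampled estimator (with positive probability no element is selected and $\bar{x}=0$), so your ``with high probability'' qualification is not optional bookkeeping but a necessary repair of the statement; when the lemma is invoked inside \SNBp{} (Theorem~\ref{lem:snb+}) this failure probability must be folded into the overall $\delta$, exactly as you indicate in your closing remark.
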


Based on the above lemma, we develop a stronger version of \SNB, named \SNBp (outlined in Algorithm~\ref{alg:single_source}). 
\begin{theorem}\label{lem:snb+}
    Given a graph, a query node $s$, and an error parameter $\eps$. Let $\phi=\gamma^{-2}$~\cite{wei2021biharmonic} be the upper bound of BD, $\tau=\eps/2$, and $\theta=\eps/2$. Let $t=O(\phi\sqrt{n(\log n)}/\tau)$, $\widetilde{V}=\{x_1,x_2,\ldots,x_t\}$ be a selected subset based on Lemma~\ref{lem:sqrtsum}. \SNBp  returns an estimated nodal biharmonic distance $\beta^\prime(s)=\sum_{t\in \widetilde{V}} \beta^{\prime}(s,t)n/(\phi\sqrt{n(\log n)}/\tau)$ where $\beta^{\prime}(s,t)=\SWF(\calG,s,t, \theta, \delta)$. \SNBp runs in $O(\frac{1}{\eps^3}\phi n^{\frac{1}{2}}\log^6(\frac{n}{\eps}))$ time, satisfying that $|\beta(s)-\beta^\prime(s)|\leq n\eps$. 
\end{theorem}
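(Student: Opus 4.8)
\textbf{Overall strategy.} The plan is to combine three ingredients that are already in hand: (i) the correctness and running time of \SWF for a single pair; (ii) the sublinear summation estimate of Lemma~\ref{lem:sqrtsum}; and (iii) the uniform upper bound $\beta(s,t)\le 2\gamma_2^{-2}=2\phi$ from \cite{wei2021biharmonic} that controls the range $[0,a]$ needed to apply Lemma~\ref{lem:sqrtsum}. The error is split additively into a \emph{sampling error} (from replacing the full sum $\sum_{t\neq s}\beta(s,t)$ by a $1/p$-scaled partial sum over $\widetilde V$) and an \emph{estimation error} (from replacing each exact $\beta(s,t)$ by the \SWF output $\beta'(s,t)$). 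I would bound each by $n\eps/2$ and conclude by the triangle inequality.

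\textbf{Key steps in order.} First, set $a=2\phi$ (valid since $x_t:=\beta(s,t)\in[0,2\phi]$ for all $t\neq s$, and setting the $s$-term to $0$), $\tau=\eps/2$, and check the hypothesis $\tau>a n^{-1/2}\log^{1/2}n$ of Lemma~\ref{lem:sqrtsum} holds in the regime of interest; then Lemma~\ref{lem:sqrtsum} gives a set $\widetilde V$ of size $t=O(\phi\sqrt{n\log n}/\tau)$ and a scaling factor $1/p$ with $p=a n^{-1/2}\log^{1/2}n/\tau$ such that
\begin{align*}
\Bigl|\sum_{t\in V\setminus\{s\}}\beta(s,t)-\tfrac{1}{p}\sum_{t\in\widetilde V}\beta(s,t)\Bigr|\le n\tau=\tfrac{n\eps}{2}.
\end{align*}
Second, invoke the \SWF guarantee (Theorem for \SWF) with error parameter $\theta=\eps/2$: for each $t\in\widetilde V$, $|\beta'(s,t)-\beta(s,t)|\le\theta$ with probability $\ge 1-\delta$; after a union bound over the $|\widetilde V|=t$ queries (absorbing the resulting $t\delta$ into $\delta$ by the standard reparametrization $\delta\mapsto\delta/t$, which only adds a $\log t$ factor inside logarithms), all these hold simultaneously, so
\begin{align*}
\Bigl|\tfrac{1}{p}\sum_{t\in\widetilde V}\beta(s,t)-\tfrac{1}{p}\sum_{t\in\widetilde V}\beta'(s,t)\Bigr|\le\tfrac{t}{p}\cdot\theta .
\end{align*}
Here one must verify $t/p = |\widetilde V|/p$ is on the order of $n$ (since in expectation $|\widetilde V|=pn$, so $t/p\approx n$), whence this term is at most $\approx n\theta=n\eps/2$. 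Third, combine the two bounds via the triangle inequality to get $|\beta(s)-\beta'(s)|\le n\eps/2+n\eps/2=n\eps$, recalling $\beta'(s)=\tfrac1p\sum_{t\in\widetilde V}\beta'(s,t)$ with $1/p=n/(\phi\sqrt{n\log n}/\tau)$ as stated. Finally, for the running time: there are $|\widetilde V|=O(\phi\sqrt{n\log n}/\tau)=O(\phi\sqrt{n\log n}/\eps)$ calls to \SWF, each costing $O(\frac{1}{\theta^2}\mathrm{poly}(\log\frac n\eps))=O(\frac{1}{\eps^2}\log^5\frac n\eps)$ by the earlier analysis of \SWF; multiplying gives $O(\frac{1}{\eps^3}\phi n^{1/2}\log^{6}\frac n\eps)$, where the extra $\log$ factor accounts for the union-bound reparametrization of $\delta$.

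\textbf{Main obstacle.} The delicate point is reconciling the \emph{random} size of $\widetilde V$ (Bernoulli sampling) with the deterministic-looking scaling $1/p$ and the per-query union bound: one needs that with high probability $|\widetilde V|=\Theta(pn)=\Theta(\phi\sqrt{n\log n}/\tau)$ so that both the number of \SWF invocations and the amplification factor $t/p$ in the estimation-error bound behave as claimed; this requires a concentration argument for $|\widetilde V|$ (e.g.\ a Chernoff bound on the sum of the Bernoulli trials) and care that the failure probability of this event, the failure probabilities of the \SWF calls, and the failure probability of Lemma~\ref{lem:sqrtsum} all fold into a single $1-\delta$ guarantee. The rest is bookkeeping: tracking that $\theta=\tau=\eps/2$ makes both error contributions exactly $n\eps/2$, and that $\phi$ (the spectral upper bound on BD) enters only through the sample size, not the per-sample cost.
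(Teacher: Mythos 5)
Your proposal is correct and follows essentially the same route as the paper's proof: a triangle-inequality split of the error into a summation-estimate part (Lemma~\ref{lem:sqrtsum}, bounded by $n\tau=n\eps/2$) and a per-pair \SWF estimation part (bounded by $n\theta=n\eps/2$), with the running time obtained by multiplying the $O(\phi\sqrt{n\log n}/\eps)$ sample size by the per-call cost of \SWF. The only difference is cosmetic---you route the triangle inequality through the exact values on the sampled set while the paper routes it through the full \SNB estimate---and you are in fact more explicit than the paper about the bounded range $[0,2\phi]$, the concentration of $|\widetilde V|$, and the union bound over failure probabilities, all of which the paper leaves implicit.
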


\normalem
\begin{algorithm}[h]
	\caption{$\SNBp(\calG,s, \eps, \phi, \delta)$}\label{alg:single_source}
	\Input{
		A connected graph $\calG=(V,E)$ with $n$ nodes; a node $s$; the error parameter $\eps$; a number $\phi$
	}
	\Output{
		Estimated nodal biharmonic distance $\beta^\prime(s)$
	}
    $\theta=\frac{\eps}{2}$; $\tau = \frac{\eps}{2}$\;
    Sample the set $\samv \subset V$ of $x=O(\phi\sqrt{n(\log n)}/\tau)$ nodes\;
    $\beta^{\prime}(s,t) = \SWF(\calG,s,t, \theta, \delta) \forall t\in \samv$\;
    $\beta^\prime(s) =\sum_{t\in \samv} \beta^{\prime}(s,t)n/(\phi\sqrt{n(\log n)}/\tau)$\;
    \Return $\beta^\prime(s)$\;
\end{algorithm}

\section{Experiments}\label{sec:exp}

This section evaluates the efficiency and accuracy of our proposed algorithms experimentally.
\subsection{Experimental Setup}
\textbf{Datasets, Query Sets, and Ground-truth.} To evaluate the efficiency and accuracy of our proposed algorithms, namely \texttt{Push}, \texttt{Push+}, \STW, and \SWF for approximating pairwise BD, as well as \SNB and \SNBp for approximating nodal BD, we conduct experiments on 5 real networks from $\mathrm{SNAP}$~\cite{LeSo16}, details of which are provided in Table~\ref{tab:statistics}. 
For each network, we pick 100 node pairs uniformly at random as the query set. The ground-truth BD values for query node pairs are obtained by applying \texttt{Push} with 1000 iterations in parallel ($\epsilon$ is in $(10^{-6},10^{-4})$).

\begin{table}[h]
    \caption{Experimented real-world networks.}
    \label{tab:statistics}
    \fontsize{9}{8}\selectfont
    \begin{tabular}{lrrcc}
    \toprule
    Network   & \#nodes ($n$) & \#edges ($m$) & avg($d$)       \\ \hline
    Facebook   &  4,039 & 88,234& 43.69  \\[2pt]
    DBLP   & 317,080 &1,049,866 &6.62 \\[2pt]
    Youtube   &  1,134,890 &2,987,624 &5.27  \\[2pt]
    Orkut   &   3,072,441 & 117,185,083 & 76.28 \\[2pt]
    LiveJournal   & 3,997,962 &  34,681,189 & 17.35 \\[2pt]
    \bottomrule
    \end{tabular}
\end{table}

\noindent \textbf{Implementation Details.} All experiments are conducted on a Linux machine with an Intel Xeon(R) Gold 6240@2.60GHz 32-core processor and 128GB of RAM. Each experiment loads the graph used into memory before beginning any timings. The eigenvalues $\lambda_2$, $\lambda_n$ and $\gamma$ of each tested network are approximated via ARPACK~\cite{lehoucq1998arpack}. All tested algorithms are implemented in Julia. For all randomized algorithms, we set failure probability $\delta = 0.01$. We report the average query times (measured in wall-clock time) and the actual average absolute error of each algorithm on each network with various $\eps \in \{0.01, 0.02, 0.05, 0.1, 0.2\}$. We exclude a method if it fails to report the result for each query within one day.

\noindent \textbf{Competitors.} We compare our algorithms with the random projection method \texttt{RP}~\cite{YiSh2018,Yi2022BiharmonicDP} and the \texttt{EXACT} method which requires computing the pseudo-inverse of matrix $\LL$.

\begin{figure*}[t!]
    \centering
    \includegraphics[width=0.95\linewidth]{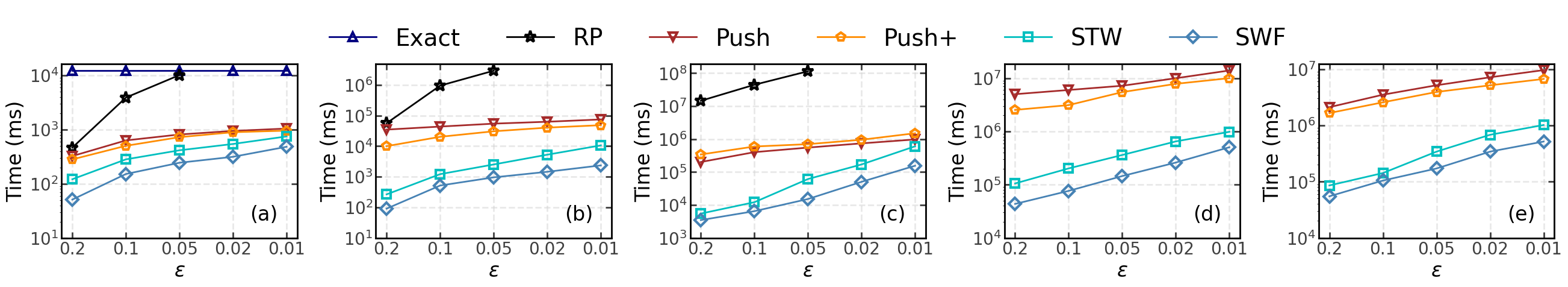}
    \vspace{-0.45cm}
    \caption{Running time of each algorithm on five datasets: (a) Facebook, (b) DBLP, (c) Youtube, (d) Orkut and (e) Livejournal.}
    \label{fig:running_time}
\end{figure*}

\subsection{Query Efficiency}

In our initial series of experiments, we conduct a comparative assessment of the efficiency of our algorithms against others. In Figure~\ref{fig:running_time}, we report the evaluation results on query efficiency (i.e., average running time) of each method on each network when $\epsilon$ is varied from 0.01 to 0.2 ($x$-axis). For \texttt{Exact} and \texttt{RP}, we plot their preprocessing time. Note that the $y$-axis is in log-scale and the measurement unit for running time is millisecond (ms). \texttt{Exact} and \texttt{RP} cannot terminate within one day in some cases, and, thus their results are not reported. Specifically, \texttt{Exact} can only handle the smallest network Facebook as it incurs out-of-memory errors on larger networks due to the space requirements for computing the $n \times n$ matrix pseudo-inverse. Akin to \texttt{Exact}, \texttt{RP} runs out of memory on Orkut and LiveJournal networks, where it requires constructing large dense random matrices. On networks such as Facebook, DBLP, and YouTube, both \texttt{Push} and \texttt{Push+} show a significant acceleration relative to \texttt{RP}. Remarkably, algorithms \STW and \SWF outpace other algorithms, often by a staggering margin. To illustrate, in the YouTube network, when $\eps=0.05$, \STW runs several orders faster than \texttt{RP} and $46$ times faster than \texttt{Push+}.

Both \STW and \SWF significantly outperform other algorithms also on larger networks Orkut and LiveJournal. Notably, on such graphs, the efficiency of \texttt{Push} and $\texttt{Push+}$ is largely degraded on account of the expensive matrix-vector multiplications. Although \STW is faster than \texttt{Push} and $\texttt{Push+}$, it is still relatively costly as it involves numerous random walks on large graphs when $\epsilon$ is small. In comparison, \SWF, due to its feedback-driven sampling strategy, dominates all other algorithms significantly, and even runs up to $23$ times faster than \STW on Orkut network when $\eps=0.05$.

As for the efficiency of \SNB and \SNBp for approximating nodal BD, we select two representative networks Facebook and DBLP. We report the running time in Figure~\ref{fig:nodal} (a) and (b), the proposed algorithms \SNB and \SNBp both outperform \textsc{Exact} by a large margin.  On Facebook, \SNB runs slightly slower than \texttt{RP} in a few cases. However, thanks to the efficient ``summation estimation'' technique, \SNBp's running time is significantly smaller than \SNB as well as \texttt{RP} on both networks under all $\eps$ values.

\begin{figure}[t]
    \centering
    \includegraphics[width=0.92\columnwidth]{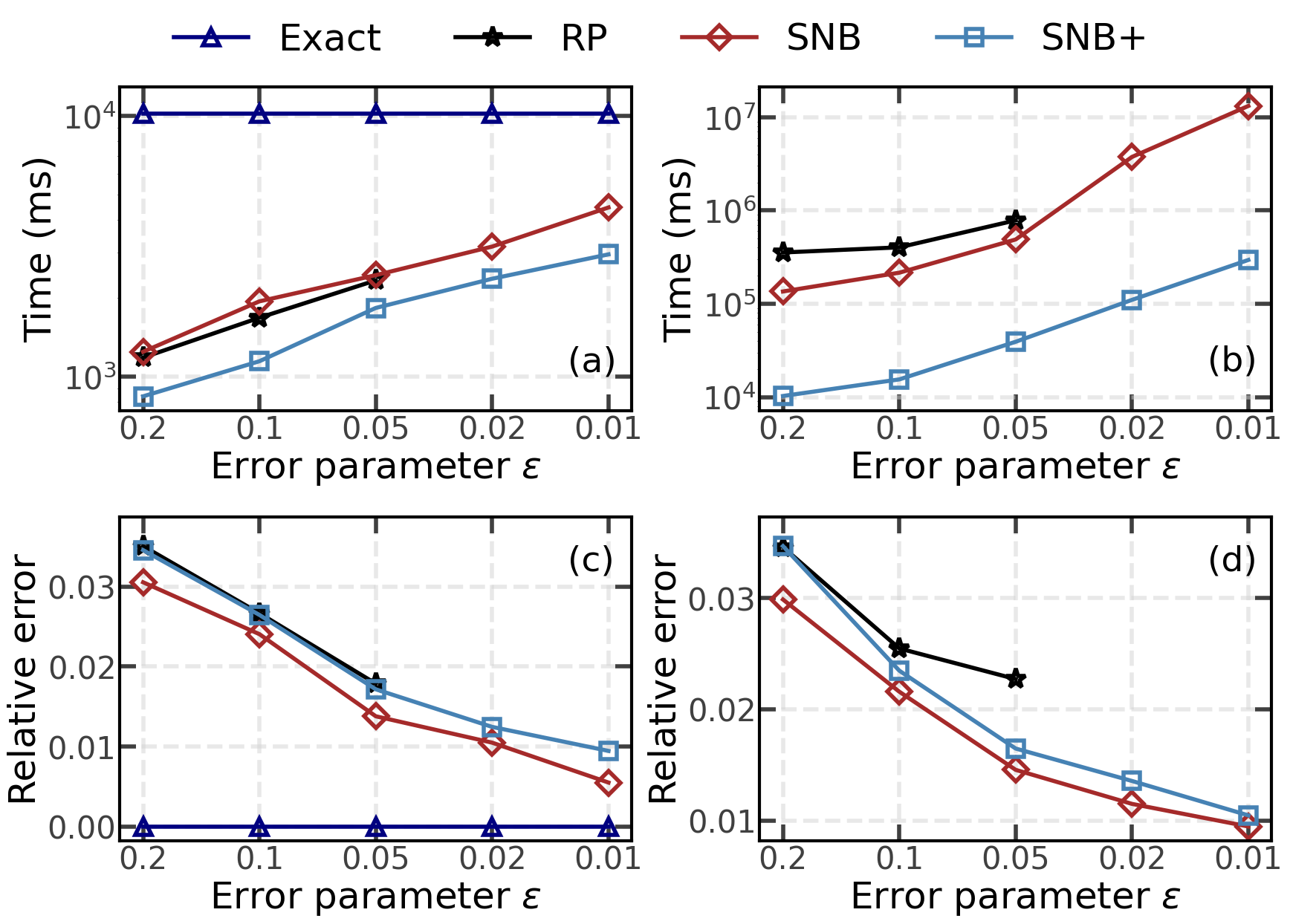}
    \vspace{-0.45cm}
    \caption{Performance comparison for approximating nodal BD on two networks: (a) (c) Facebook, (b) (d) DBLP.}
    \label{fig:nodal}
\end{figure}
\begin{figure}[t]
  \centering
  \includegraphics[width=0.92\columnwidth]{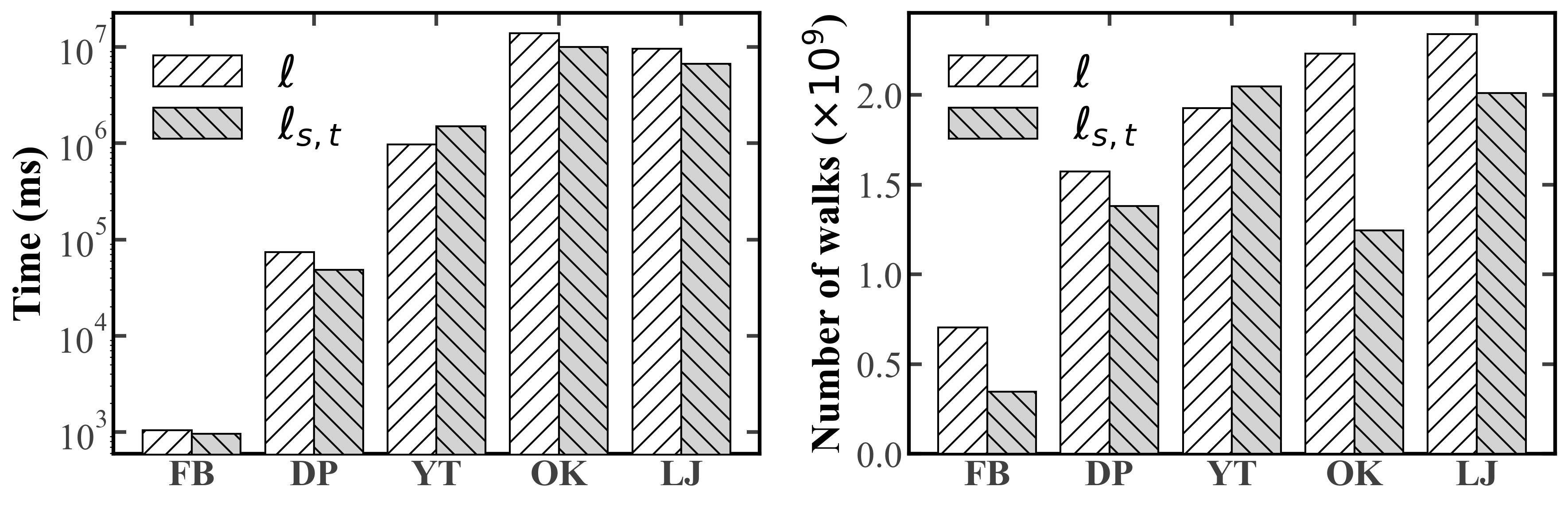}
  \vspace{-0.45cm}
  \caption{$\ell$ versus $\ell_{s,t}$ when $\eps=0.01$.}
  \label{fig:l1l2}
\end{figure}
\boldsymbol{$\mathbf{\ell \textrm{ \textbf{versus} } \ell_{s,t}.}$} We next evaluate the efficiency performance of \STW with the truncated length $\ell$ (Eq.~\eqref{eq:uni_l}) and $\ell_{s,t}$ (Eq.~\eqref{eq:cus_l}), respectively. Figure~\ref{fig:l1l2} (a) reports the running time using $\ell$ and $\ell_{s,t}$ for 100 random pairs of nodes on five networks, namely FB, DP, YT, OK, and LJ for $\epsilon=0.01$. Using $\ell$ obviously needs more time than using $\ell_{s,t}$ on higher average degree graphs FB, DP, OK, and LJ. On the graph with a lower average degree YouTube, using $\ell$ yields to be slower than using $\ell_{s,t}$. This phenomenon is due to that $\ell_{s,t}$ is inversely correlated with the degrees of nodes, meaning that $\ell_{s,t}$ will be much smaller than $\ell$ on graphs with larger average degrees. In summary, $\ell_{s,t}$ can bring considerable efficiency enhancements compared to $\ell$ on graphs with high average degrees. Note that the number of samples relies on the truncated length, and therefore, reducing length can lead to a reduced number of samples. Figure~\ref{fig:l1l2} (b) reports the average number of random walks, where the major overhead of \STW stems from. Akin to the observation from Figure~\ref{fig:l1l2} (a), \STW with $\ell_{s,t}$ requires at most $2$ times fewer random walks than $\ell$ in networks with high average degrees and also requires more random walks than $\ell$ in the low average degree network YouTube.
\begin{figure*}[t]
  \centering
  \includegraphics[width=0.95\linewidth]{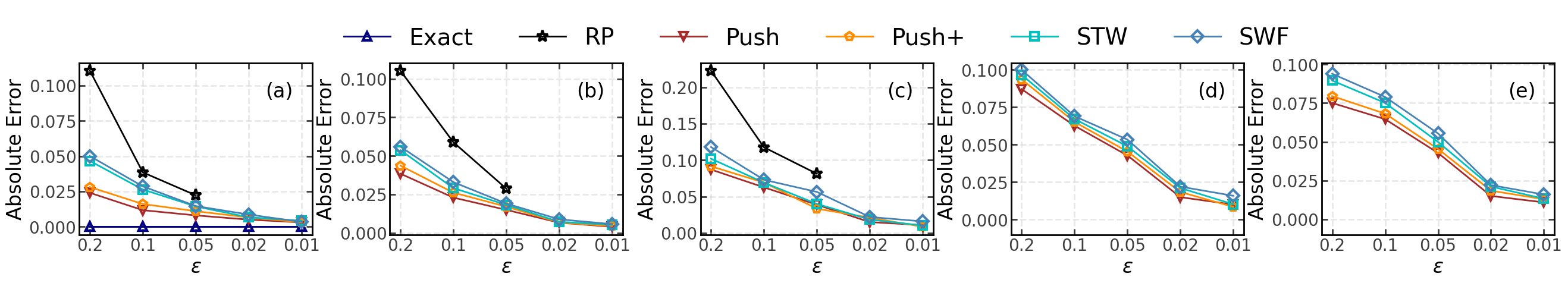}
  \vspace{-0.45cm}
  \caption{Absolute error versus $\eps$ on five datasets: (a) Facebook, (b) DBLP, (c) Youtube, (d) Orkut and (e) Livejournal.}
  \label{fig:abs_err}
\end{figure*}
\subsection{Query Accuracy}
Figure~\ref{fig:abs_err} reports the actual average absolute error for each algorithm on different networks when varying $\epsilon$ from 0.01 to 0.2 for random queries. Note that the $x$-axis and $y$-axis represent the given absolute error threshold $\epsilon$ and the actual average absolute error, respectively. As observed in Figure~\ref{fig:abs_err}, all tested methods return accurate query results, whose actual absolute errors are less than the given error threshold $\epsilon$ (except \texttt{RP}). More specifically, most algorithms achieve average absolute errors smaller than 0.1 even for large error thresholds such as $\epsilon=0.2$ and the errors approach 0 as $\epsilon$ is decreased. The \texttt{RP} algorithm always produces the highest empirical errors. For all networks, \texttt{Push} and \texttt{Push+} achieve similar approximation errors for all $\eps$ settings. This suggests that by selecting an appropriate truncation length ($\ell$ or $\ell_{s,t}$), we've managed to reduce the runtime (see Figure~\ref{fig:running_time}) without incurring a loss in approximation accuracy. As for the two random walk-based algorithms \STW and \SWF, they yield slightly larger errors compared to the competitors \texttt{Push}, and \texttt{Push+}. The reason is that the matrix-vector multiplications lead to a more accurate estimation of BD. Owing to \SWF's sophisticated feedback-driven sampling strategy, there's a reduction in the number of random walks executed, leading to a modest uptick in approximation error relative to \STW. Nonetheless, this marginal trade-off is justifiable, as illustrated in Figure~\ref{fig:running_time}, where \SWF significantly outperforms \STW in terms of efficiency.

As for the query accuracy of \SNB and \SNBp for approximating nodal BD, we report results in Figure~\ref{fig:nodal} (c) and (d). Both of the proposed algorithms \SNB and \SNBp achieve low empirical absolute error and \SNBp presents slightly higher error than \SNB due to the error caused by the ``summation estimation'' technique. They both outperform \texttt{RP} in terms of the running error.

\section{Related Work}\label{sec:related}
In this section, we discuss some prior work on biharmonic distance and explain how they are related to ours.

Biharmonic distance has found applications in different fields. In~\cite{ACC2018BHD, Yi2022BiharmonicDP}, the authors established a connection between the biharmonic distance of a graph and its second-order network coherence. In~\cite{YiSh2018}, the biharmonic distance was used to measure the edge centrality in a network. Furthermore, biharmonic distance has also been utilized in computational graphics~\cite{lipman2010biharmonic,verma2017hunt}, machine learning~\cite{kreuzer2021rethinking,black2023understanding}, and physics~\cite{Tyloo2017RobustnessOS}, as well as graph matching~\cite{fan2020spectral} and leader selection in noisy networks~\cite{bd14}.

Despite its importance, the prior work has made little progress on devising efficient and effective practical algorithms for computing BD. In~\cite{zhang2020fast,Yi2022BiharmonicDP}, the authors developed an algorithm to compute BD whose run time is linear in the number of edges in the network. However, this time complexity is not desirable when dealing with large networks and also their algorithm leads to large hidden memory load. The present work proposes local algorithms (reading only a small portion of the input graph) to approximate biharmonic distance with small space load.

There also exist several local algorithms for other random walk based quantities, such as the stationary distribution~\cite{lee2013computing}, PageRank~\cite{bressan2018sublinear}, Personalized PageRank (PPR)~\cite{wang2017fora} and transition probabilities~\cite{banerjee2015fast}. At first glance, it might seem that we could simply adapt and extend these techniques for BD computation. However, computing BD is much more involved. This is because they are defined according to inherently different types of random walks. More concretely, PPR leverages the one called random walk with restart (RWR)~\cite{Tong2006FastRW}, which would stop at each visited node with a certain probability during the walk. ER relies on simple random walks of various fixed lengths (from 0 to $\infty$)~\cite{peng2021local}. In contrast, BD relies on comparing the end nodes of pairs of walks with various lengths, indicating that the sampling process in BD is more complex. 

\section{Conclusion}\label{sec:conclution}
We provided a novel formula of biharmonic distance (BD) and subsequently proposed using deterministic truncated graph traversal to approximate BD queries. Specifically, we presented \texttt{Push} and \texttt{Push+}, where the former is with a universal truncated length for any pair of nodes while the latter is with a customized length for each query. These two choices suit different scenarios and are adopted according to the setup in hand. For the purpose of practical efficiency, we further developed \STW and \SWF, which are built on the idea of sampling random walks to estimate the desired probabilities. \STW directly samples walks to approximate related probabilities while \SWF observes whether in the process of random walk sampling, the predefined error criterion has been met, thus enabling an early termination of the simulation. Based on the algorithms for approximating pairwise BD, we propose \SNB for computing nodal BD, and further propose \SNBp to speed up its run time. The experimental results on real-world graph data demonstrate that our algorithms consistently and significantly outperform the existing competitors.
In the future, we aim to study the computation of biharmonic distance with relative error guarantees as well as under distributed and multithreading setups. Furthermore, it would be interesting to speed up our proposed algorithms leveraging novel heuristic and approximation techniques.

\newpage

\newpage

\appendix

\section{Proof of Lemma~\ref{lem:snb}}
Since \SWF returns $\beta^\prime(s,t)$ within $O(\frac{1}{\eps^2}\log^5\frac{n}{\eps})$ time satisfying $|\beta^\prime(s,t) - \beta(s,t)|\leq \eps$ for any node pair $(s,t)$, \SNB actually computes $\beta^\prime(s,t)$ for every $t\in V\setminus\{s\}$ by calling \SWF $n-1$ times. Therefore, the time complexity of \SNB is $n\times O(\frac{1}{\eps^2}\log^5\frac{n}{\eps}) = O(\frac{n}{\eps^2}\log^5\frac{n}{\eps})$. Furthermore, based on the triangle inequality, $|\beta(s) - \beta^\prime(s)|\leq n\eps$ holds.
\section{Proof of Lemma~\ref{lem:snb+}}
\SNBp runs by calling \SWF $x = O(\phi\sqrt{n\log n}/\eps)$ times, so  the time complexity of \SNBp is $O(\frac{1}{\eps^3}\phi n^{1/2}\log^6(n/\eps))$. Furthermore, let $\hat{\beta}(s)$ be the estimated nodal BD by \SNB. Based on Lemma~\ref{lem:sqrtsum} and the triangle inequality, $\left|\beta(s)-\beta^{\prime}(s)\right|=\left|\beta(s)-\beta^{\prime}(s)+\hat{\beta}(s)-\hat{\beta}(s)\right| \leq|\beta(s)-\hat{\beta}(s)|+\left|\hat{\beta}(s)-\beta^{\prime}(s)\right| \leq n \epsilon / 2+n \epsilon / 2=n \epsilon $ holds.

\begin{thebibliography}{53}

    
    \ifx \showCODEN    \undefined \def \showCODEN     #1{\unskip}     \fi
    \ifx \showDOI      \undefined \def \showDOI       #1{#1}\fi
    \ifx \showISBNx    \undefined \def \showISBNx     #1{\unskip}     \fi
    \ifx \showISBNxiii \undefined \def \showISBNxiii  #1{\unskip}     \fi
    \ifx \showISSN     \undefined \def \showISSN      #1{\unskip}     \fi
    \ifx \showLCCN     \undefined \def \showLCCN      #1{\unskip}     \fi
    \ifx \shownote     \undefined \def \shownote      #1{#1}          \fi
    \ifx \showarticletitle \undefined \def \showarticletitle #1{#1}   \fi
    \ifx \showURL      \undefined \def \showURL       {\relax}        \fi
    \providecommand\bibfield[2]{#2}
    \providecommand\bibinfo[2]{#2}
    \providecommand\natexlab[1]{#1}
    \providecommand\showeprint[2][]{arXiv:#2}
    
    \bibitem[Audibert et~al\mbox{.}(2007)]%
            {audibert2007tuning}
    \bibfield{author}{\bibinfo{person}{Jean-Yves Audibert},
      \bibinfo{person}{R{\'e}mi Munos}, {and} \bibinfo{person}{Csaba
      Szepesv{\'a}ri}.} \bibinfo{year}{2007}\natexlab{}.
    \newblock \showarticletitle{Tuning bandit algorithms in stochastic
      environments}. In \bibinfo{booktitle}{\emph{International Conference on
      Algorithmic Learning Theory}}. Springer, \bibinfo{pages}{150--165}.
    \newblock
    
    \bibitem[Backstrom et~al\mbox{.}(2011)]%
            {Backstrom2011FourDO}
    \bibfield{author}{\bibinfo{person}{Lars Backstrom}, \bibinfo{person}{Paolo
      Boldi}, \bibinfo{person}{Marco Rosa}, \bibinfo{person}{Johan Ugander}, {and}
      \bibinfo{person}{Sebastiano Vigna}.} \bibinfo{year}{2012}\natexlab{}.
    \newblock \showarticletitle{Four degrees of separation}. In
      \bibinfo{booktitle}{\emph{Proceedings of the 4th Annual ACM Web Science Conference}}. ACM, \bibinfo{pages}{33--42}.
    \newblock
    
    
    \bibitem[Bamieh et~al\mbox{.}(2012)]%
            {bamieh2012coherence}
    \bibfield{author}{\bibinfo{person}{Bassam Bamieh}, \bibinfo{person}{Mihailo~R
      Jovanovic}, \bibinfo{person}{Partha Mitra}, {and} \bibinfo{person}{Stacy
      Patterson}.} \bibinfo{year}{2012}\natexlab{}.
    \newblock \showarticletitle{Coherence in large-scale networks:
      Dimension-dependent limitations of local feedback}.
    \newblock \bibinfo{journal}{\emph{IEEE Transactions on Automatic Control}}
      \bibinfo{volume}{57}, \bibinfo{number}{9},
      \bibinfo{pages}{2235--2249}.
    \newblock
    
    
    \bibitem[Banerjee and Lofgren(2015)]%
            {banerjee2015fast}
    \bibfield{author}{\bibinfo{person}{Siddhartha Banerjee} {and}
      \bibinfo{person}{Peter Lofgren}.} \bibinfo{year}{2015}\natexlab{}.
    \newblock \showarticletitle{Fast bidirectional probability estimation in markov
      models}. In \bibinfo{booktitle}{\emph{Advances in Neural Information
      Processing Systems}}, Vol.~\bibinfo{volume}{28}. \bibinfo{pages}{1423--1431}.
    \newblock
    
    
    \bibitem[Bapat(2010)]%
            {bapat2010graphs}
    \bibfield{author}{\bibinfo{person}{Ravindra B. Bapat}.}
      \bibinfo{year}{2010}\natexlab{}.
    \newblock \bibinfo{booktitle}{\emph{Graphs and matrices}}.
    \newblock \bibinfo{publisher}{Springer, London}.
    \newblock
    
    
    \bibitem[Beretta and T{\v{e}}tek(2022)]%
            {sum2022}
    \bibfield{author}{\bibinfo{person}{Lorenzo Beretta} {and}
      \bibinfo{person}{Jakub T{\v{e}}tek}.} \bibinfo{year}{2022}\natexlab{}.
    \newblock \showarticletitle{Better sum estimation via weighted sampling}. In
      \bibinfo{booktitle}{\emph{Proceedings of the Annual ACM-SIAM Symposium on
      Discrete Algorithms}}. SIAM, \bibinfo{pages}{2303--2338}.
    \newblock
    
    
    \bibitem[Black et~al\mbox{.}(2023)]%
            {black2023understanding}
    \bibfield{author}{\bibinfo{person}{Mitchell Black}, \bibinfo{person}{Zhengchao
      Wan}, \bibinfo{person}{Amir Nayyeri}, {and} \bibinfo{person}{Yusu Wang}.}
      \bibinfo{year}{2023}\natexlab{}.
    \newblock \showarticletitle{Understanding oversquashing in GNNs through the
      lens of effective resistance}. In \bibinfo{booktitle}{\emph{International
      Conference on Machine Learning}}. PMLR, \bibinfo{pages}{2528--2547}.
    \newblock
    
    
    \bibitem[Bozzo(2013)]%
            {bozzo2013moore}
    \bibfield{author}{\bibinfo{person}{Enrico Bozzo}.}
      \bibinfo{year}{2013}\natexlab{}.
    \newblock \showarticletitle{The Moore--Penrose inverse of the normalized graph
      Laplacian}.
    \newblock \bibinfo{journal}{\emph{Linear Algebra  and its Applications}} \bibinfo{volume}{439},
      \bibinfo{number}{10}, \bibinfo{pages}{3038--3043}.
    \newblock
    
    
    \bibitem[Bressan et~al\mbox{.}(2018)]%
            {bressan2018sublinear}
    \bibfield{author}{\bibinfo{person}{Marco Bressan}, \bibinfo{person}{Enoch
      Peserico}, {and} \bibinfo{person}{Luca Pretto}.}
      \bibinfo{year}{2018}\natexlab{}.
    \newblock \showarticletitle{Sublinear algorithms for local graph centrality
      estimation}. In \bibinfo{booktitle}{\emph{Proceddings of 59th Annual Symposium on
      Foundations of Computer Science}}. IEEE, \bibinfo{pages}{709--718}.
    \newblock
    
    
    \bibitem[Chen and Zhang(2007)]%
            {chen2007resistance}
    \bibfield{author}{\bibinfo{person}{Haiyan Chen} {and} \bibinfo{person}{Fuji Zhang}.}
      \bibinfo{year}{2007}\natexlab{}.
    \newblock \showarticletitle{Resistance distance and the normalized Laplacian
      spectrum}.
    \newblock \bibinfo{journal}{\emph{Discrete Applied Mathematics}} \bibinfo{volume}{155},
      \bibinfo{number}{5}, \bibinfo{pages}{654--661}.
    \newblock
    
    
    \bibitem[Chen et~al\mbox{.}(2020)]%
            {chen2020spectral}
    \bibfield{author}{\bibinfo{person}{Wei Chen}, \bibinfo{person}{Dan Wang},
      \bibinfo{person}{Ji Liu}, \bibinfo{person}{Yongxin Chen},
      \bibinfo{person}{Sei~Zhen Khong}, \bibinfo{person}{Tamer Ba{\c{s}}ar},
      \bibinfo{person}{Karl~H Johansson}, {and} \bibinfo{person}{Li Qiu}.}
      \bibinfo{year}{2020}\natexlab{}.
    \newblock \showarticletitle{On spectral properties of signed Laplacians with
      connections to eventual positivity}.
    \newblock \bibinfo{journal}{\emph{IEEE Transactions on Automatic Control}}
      \bibinfo{volume}{66}, \bibinfo{number}{5},
      \bibinfo{pages}{2177--2190}.
    \newblock
    
    
    \bibitem[Cohen et~al\mbox{.}(2014)]%
            {CoKyMiPaJaPeRaXu14}
    \bibfield{author}{\bibinfo{person}{Michael~B Cohen}, \bibinfo{person}{Rasmus
      Kyng}, \bibinfo{person}{Gary~L Miller}, \bibinfo{person}{Jakub~W Pachocki},
      \bibinfo{person}{Richard Peng}, \bibinfo{person}{Anup~B Rao}, {and}
      \bibinfo{person}{Shen~Chen Xu}.} \bibinfo{year}{2014}\natexlab{}.
    \newblock \showarticletitle{Solving SDD linear systems in nearly $m \log^{1/2}
      n$ time}. In \bibinfo{booktitle}{\emph{Proceedings of the 46th Annual
      Symposium on Theory of Computing}}. \bibinfo{publisher}{ACM},
      \bibinfo{pages}{343--352}.
    \newblock
    
    
    \bibitem[Doyle and Snell(1984)]%
            {doyle1984random}
    \bibfield{author}{\bibinfo{person}{Peter G. Doyle} {and} \bibinfo{person}{J. Laurie
      Snell}.} \bibinfo{year}{1984}\natexlab{}.
    \newblock \bibinfo{booktitle}{\emph{Random walks and electric networks}}.
    \newblock \bibinfo{publisher}{Mathematical Association of America, Washington,
      DC}.
    \newblock
    
    
    \bibitem[Fan et~al\mbox{.}(2020)]%
            {fan2020spectral}
    \bibfield{author}{\bibinfo{person}{Zhou Fan}, \bibinfo{person}{Cheng Mao},
      \bibinfo{person}{Yihong Wu}, {and} \bibinfo{person}{Jiaming Xu}.}
      \bibinfo{year}{2020}\natexlab{}.
    \newblock \showarticletitle{Spectral graph matching and regularized quadratic
      relaxations: Algorithm and theory}. In
      \bibinfo{booktitle}{\emph{International Conference on Machine Learning}}.
      PMLR, \bibinfo{pages}{2985--2995}.
    \newblock
    
    
    \bibitem[Feige(2006)]%
            {feigesum2006}
    \bibfield{author}{\bibinfo{person}{Uriel Feige}.}
      \bibinfo{year}{2006}\natexlab{}.
    \newblock \showarticletitle{On sums of independent random variables with
      unbounded variance and estimating the average degree in a graph}.
    \newblock \bibinfo{journal}{\emph{SIAM Journal on Computing}} \bibinfo{volume}{35},
      \bibinfo{number}{4}, \bibinfo{pages}{964--984}.
    \newblock
    
    
    \bibitem[Fitch and Leonard(2014)]%
            {bd14}
    \bibfield{author}{\bibinfo{person}{Katherine Fitch} {and}
      \bibinfo{person}{Naomi Leonard}.} \bibinfo{year}{2014}\natexlab{}.
    \newblock \showarticletitle{Joint centrality distinguishes optimal leaders in
      noisy networks}.
    \newblock \bibinfo{journal}{\emph{IEEE Transactions on Control of Network
      Systems}}  \bibinfo{volume}{3},
      \bibinfo{pages}{366--378}.
    \newblock
    
    
    \bibitem[Gao et~al\mbox{.}(2023)]%
            {solver2023}
    \bibfield{author}{\bibinfo{person}{Yuan Gao}, \bibinfo{person}{Rasmus Kyng},
      {and} \bibinfo{person}{Daniel~A. Spielman}.} \bibinfo{year}{2023}\natexlab{}.
    \newblock \showarticletitle{Robust and practical solution of laplacian
      equations by approximate elimination}.
    \newblock \bibinfo{journal}{\emph{CoRR}}  \bibinfo{volume}{abs/2303.00709}
      (\bibinfo{year}{2023}).
    \newblock
    \showeprint[arXiv]{2303.00709}
    
    
    \bibitem[Haveliwala and Kamvar(2003)]%
            {haveliwala2003second}
    \bibfield{author}{\bibinfo{person}{Taher~H Haveliwala} {and}
      \bibinfo{person}{Sepandar~D Kamvar}.} \bibinfo{year}{2003}\natexlab{}.
    \newblock \bibinfo{booktitle}{\emph{The second eigenvalue of the Google
      matrix}}.
    \newblock \bibinfo{type}{{T}echnical {R}eport}.
    \newblock
    
    
    \bibitem[Hoeffding(1963)]%
            {hoeffding}
    \bibfield{author}{\bibinfo{person}{Wassily Hoeffding}.}
      \bibinfo{year}{1963}\natexlab{}.
    \newblock \showarticletitle{Probability inequalities for sums of bounded random
      variables}.
    \newblock \bibinfo{journal}{\emph{American Statistical Association}}
      \bibinfo{volume}{58}, \bibinfo{number}{301},
      \bibinfo{pages}{13--30}.
    \newblock
    
    
    \bibitem[Jin et~al\mbox{.}(2019)]%
            {8970897}
    \bibfield{author}{\bibinfo{person}{Yujia Jin}, \bibinfo{person}{Qi Bao}, {and}
      \bibinfo{person}{Zhongzhi Zhang}.} \bibinfo{year}{2019}\natexlab{}.
    \newblock \showarticletitle{Forest distance closeness centrality in
      disconnected graphs}. In \bibinfo{booktitle}{\emph{IEEE International
      Conference on Data Mining}}. \bibinfo{pages}{339--348}.
    \newblock

    
    
    \bibitem[Klein and Randi{\'c}(1993)]%
            {klein1993resistance}
    \bibfield{author}{\bibinfo{person}{D.J. Klein} {and} \bibinfo{person}{M.
      Randi{\'c}}.} \bibinfo{year}{1993}\natexlab{}.
    \newblock \showarticletitle{Resistance distance}.
    \newblock \bibinfo{journal}{\emph{Journal of Mathematical Chemistry}}
      \bibinfo{volume}{12}, \bibinfo{number}{1},
      \bibinfo{pages}{81--95}.
    \newblock
    
    
    \bibitem[Kreuzer et~al\mbox{.}(2021)]%
            {kreuzer2021rethinking}
    \bibfield{author}{\bibinfo{person}{Devin Kreuzer}, \bibinfo{person}{Dominique
      Beaini}, \bibinfo{person}{Will Hamilton}, \bibinfo{person}{Vincent
      L{\'e}tourneau}, {and} \bibinfo{person}{Prudencio Tossou}.}
      \bibinfo{year}{2021}\natexlab{}.
    \newblock \showarticletitle{Rethinking graph transformers with spectral
      attention}.
    \newblock \bibinfo{journal}{\emph{Advances in Neural Information Processing
      Systems}}  \bibinfo{volume}{34}, \bibinfo{pages}{21618--21629}.
    \newblock
    
    
    \bibitem[Lang and Engelbrecht(2020)]%
            {Lang2020DistributedRW}
    \bibfield{author}{\bibinfo{person}{Ryan~Dieter Lang} {and}
      \bibinfo{person}{Andries~Petrus Engelbrecht}.}
      \bibinfo{year}{2020}\natexlab{}.
    \newblock \showarticletitle{Distributed random walks for fitness landscape
      analysis}.
    \newblock \bibinfo{journal}{\emph{Proceedings of the Genetic and Evolutionary
      Computation Conference}}.
    \newblock
    
    
    \bibitem[Lee et~al\mbox{.}(2013)]%
            {lee2013computing}
    \bibfield{author}{\bibinfo{person}{Christina~E Lee}, \bibinfo{person}{Asuman
      Ozdaglar}, {and} \bibinfo{person}{Devavrat Shah}.}
      \bibinfo{year}{2013}\natexlab{}.
    \newblock \showarticletitle{Computing the stationary distribution locally}. In
      \bibinfo{booktitle}{\emph{Advances in Neural Information Processing
      Systems}}, Vol.~\bibinfo{volume}{1}. \bibinfo{pages}{1376--1384}.
    \newblock
    
    
    \bibitem[Lehoucq et~al\mbox{.}(1998)]%
            {lehoucq1998arpack}
    \bibfield{author}{\bibinfo{person}{Richard~B Lehoucq}, \bibinfo{person}{Danny~C
      Sorensen}, {and} \bibinfo{person}{Chao Yang}.}
      \bibinfo{year}{1998}\natexlab{}.
    \newblock \bibinfo{booktitle}{\emph{ARPACK users' guide: solution of
      large-scale eigenvalue problems with implicitly restarted Arnoldi methods}}.
    \newblock \bibinfo{publisher}{SIAM}.
    \newblock
    
    
    \bibitem[Leiserson et~al\mbox{.}(1994)]%
            {leiserson1994introduction}
    \bibfield{author}{\bibinfo{person}{Charles~Eric Leiserson},
      \bibinfo{person}{Ronald~L Rivest}, \bibinfo{person}{Thomas~H Cormen}, {and}
      \bibinfo{person}{Clifford Stein}.} \bibinfo{year}{1994}\natexlab{}.
    \newblock \bibinfo{booktitle}{\emph{Introduction to Algorithms}}.
      Vol.~\bibinfo{volume}{3}.
    \newblock \bibinfo{publisher}{MIT press Cambridge, MA, USA}.
    \newblock
    
    
    \bibitem[Leskovec and Sosi{\v{c}}(2016)]%
            {LeSo16}
    \bibfield{author}{\bibinfo{person}{Jure Leskovec} {and} \bibinfo{person}{Rok
      Sosi{\v{c}}}.} \bibinfo{year}{2016}\natexlab{}.
    \newblock \showarticletitle{{SNAP}: A general-purpose network analysis and
      graph-mining library}.
    \newblock \bibinfo{journal}{\emph{ACM Transactions on Intelligent Systems and
      Technolog}} \bibinfo{volume}{8}, \bibinfo{number}{1},
      \bibinfo{pages}{1}.
    \newblock
    
    
    \bibitem[Lipman et~al\mbox{.}(2010)]%
            {lipman2010biharmonic}
    \bibfield{author}{\bibinfo{person}{Y. Lipman}, \bibinfo{person}{R.M. Rustamov},
      {and} \bibinfo{person}{T.A. Funkhouser}.} \bibinfo{year}{2010}\natexlab{}.
    \newblock \showarticletitle{Biharmonic distance}.
    \newblock \bibinfo{journal}{\emph{ACM Transactions on Graphics}}
      \bibinfo{volume}{29}, \bibinfo{pages}{1--11}.
    \newblock
    
    
    \bibitem[L{\"u} and Zhou(2011)]%
            {lu2011link}
    \bibfield{author}{\bibinfo{person}{Linyuan L{\"u}} {and} \bibinfo{person}{Tao
      Zhou}.} \bibinfo{year}{2011}\natexlab{}.
    \newblock \showarticletitle{Link prediction in complex networks: A survey}.
    \newblock \bibinfo{journal}{\emph{Physica A: Statistical Mechanics and its Applications}} \bibinfo{volume}{390}, \bibinfo{number}{6}, \bibinfo{pages}{1150--1170}.
    \newblock
    
    
    \bibitem[Motwani and Raghavan(1995)]%
            {Motwani1995RandomizedA}
    \bibfield{author}{\bibinfo{person}{Rajeev Motwani} {and}
      \bibinfo{person}{Prabhakar Raghavan}.} \bibinfo{year}{1995}\natexlab{}.
    \newblock \showarticletitle{Randomized algorithms}.
    \newblock \bibinfo{journal}{\emph{SIGACT News}}  \bibinfo{volume}{26}, \bibinfo{pages}{48--50}.
    \newblock
    
    
    \bibitem[Newman(2018)]%
            {newman2018networks}
    \bibfield{author}{\bibinfo{person}{Mark Newman}.}
      \bibinfo{year}{2018}\natexlab{}.
    \newblock \bibinfo{booktitle}{\emph{Networks}}.
    \newblock \bibinfo{publisher}{Oxford University Press}.
    \newblock
    
    
    \bibitem[Peng et~al\mbox{.}(2021)]%
            {peng2021local}
    \bibfield{author}{\bibinfo{person}{Pan Peng}, \bibinfo{person}{Daniel Lopatta},
      \bibinfo{person}{Yuichi Yoshida}, {and} \bibinfo{person}{Gramoz Goranci}.}
      \bibinfo{year}{2021}\natexlab{}.
    \newblock \showarticletitle{Local algorithms for estimating effective
      resistance}. In \bibinfo{booktitle}{\emph{Proceedings of the 27th ACM SIGKDD
      Conference on Knowledge Discovery \& Data Mining}}.
      \bibinfo{pages}{1329--1338}.
    \newblock
    
    
    \bibitem[Peng and Li(2017)]%
            {peng2017kirchhoff}
    \bibfield{author}{\bibinfo{person}{Yingjun Peng} {and} \bibinfo{person}{Shuchao Li}.}
      \bibinfo{year}{2017}\natexlab{}.
    \newblock \showarticletitle{On the Kirchhoff index and the number of spanning
      trees of linear phenylenes}.
    \newblock \bibinfo{journal}{\emph{MATCH Communications in Mathematical and in
      Computer Chemistry}} \bibinfo{volume}{77}, \bibinfo{number}{3}, \bibinfo{pages}{765--780}.
    \newblock
    
    
    \bibitem[Shi et~al\mbox{.}(2019)]%
            {signed2019}
    \bibfield{author}{\bibinfo{person}{Guodong Shi}, \bibinfo{person}{Claudio
      Altafini}, {and} \bibinfo{person}{John~S. Baras}.}
      \bibinfo{year}{2019}\natexlab{}.
    \newblock \showarticletitle{Dynamics over signed networks}.
    \newblock \bibinfo{journal}{\emph{SIAM Review}} \bibinfo{volume}{61},
      \bibinfo{number}{2}, \bibinfo{pages}{229--257}.
    \newblock
    
    
    \bibitem[Shimada et~al\mbox{.}(2016)]%
            {shimada2016graph}
    \bibfield{author}{\bibinfo{person}{Yutaka Shimada}, \bibinfo{person}{Yoshito
      Hirata}, \bibinfo{person}{Tohru Ikeguchi}, {and} \bibinfo{person}{Kazuyuki
      Aihara}.} \bibinfo{year}{2016}\natexlab{}.
    \newblock \showarticletitle{Graph distance for complex networks}.
    \newblock \bibinfo{journal}{\emph{Scientific Reports}} \bibinfo{volume}{6},
      \bibinfo{number}{1}, \bibinfo{pages}{34944}.
    \newblock
    
    
    \bibitem[Spielman and Teng(2014)]%
            {SpTe14}
    \bibfield{author}{\bibinfo{person}{Daniel~A. Spielman} {and}
      \bibinfo{person}{Shang{-}Hua Teng}.} \bibinfo{year}{2014}\natexlab{}.
    \newblock \showarticletitle{Nearly linear time algorithms for preconditioning
      and solving symmetric, diagonally dominant linear systems}.
    \newblock \bibinfo{journal}{\emph{SIAM Journal on Matrix Analysis and Applications}}
      \bibinfo{volume}{35}, \bibinfo{number}{3},
      \bibinfo{pages}{835--885}.
    \newblock
    
    
    \bibitem[Tong et~al\mbox{.}(2006)]%
            {Tong2006FastRW}
    \bibfield{author}{\bibinfo{person}{Hanghang Tong}, \bibinfo{person}{Christos
      Faloutsos}, {and} \bibinfo{person}{Jia-Yu Pan}.}
      \bibinfo{year}{2006}\natexlab{}.
    \newblock \showarticletitle{Fast random walk with restart and its
      applications}.
    \newblock \bibinfo{journal}{\emph{Sixth International Conference on Data
      Mining}}, \bibinfo{pages}{613--622}.
    \newblock
    
    
    \bibitem[Tran et~al\mbox{.}(2019)]%
            {sdwww19}
    \bibfield{author}{\bibinfo{person}{Thanh Tran}, \bibinfo{person}{Xinyue Liu},
      \bibinfo{person}{Kyumin Lee}, {and} \bibinfo{person}{Xiangnan Kong}.}
      \bibinfo{year}{2019}\natexlab{}.
    \newblock \showarticletitle{Signed distance-based deep memory recommender}. In
      \bibinfo{booktitle}{\emph{The World Wide Web Conference}}.
      \bibinfo{publisher}{ACM},
      \bibinfo{pages}{1841–1852}.
    \newblock
    
    
    \bibitem[Tsitsulin et~al\mbox{.}(2020)]%
            {spdwww20}
    \bibfield{author}{\bibinfo{person}{Anton Tsitsulin}, \bibinfo{person}{Marina
      Munkhoeva}, {and} \bibinfo{person}{Bryan Perozzi}.}
      \bibinfo{year}{2020}\natexlab{}.
    \newblock \showarticletitle{Just SLaQ when you approximate: accurate spectral
      distances for web-scale graphs}. In \bibinfo{booktitle}{\emph{Proceedings of
      The Web Conference}}. \bibinfo{publisher}{ACM},
      \bibinfo{pages}{2697–2703}.
    \newblock
    
    
    \bibitem[Tyloo et~al\mbox{.}(2017)]%
            {Tyloo2017RobustnessOS}
    \bibfield{author}{\bibinfo{person}{Melvyn Tyloo}, \bibinfo{person}{Tommaso
      Coletta}, {and} \bibinfo{person}{Philippe Jacquod}.}
      \bibinfo{year}{2017}\natexlab{}.
    \newblock \showarticletitle{Robustness of synchrony in complex networks and
      generalized kirchhoff indices.}
    \newblock \bibinfo{journal}{\emph{Physical Review Letters}}
      \bibinfo{volume}{120 8}, \bibinfo{pages}{084101}.
    \newblock
    
    
    \bibitem[Tyloo et~al\mbox{.}(2019)]%
            {tyloo2019key}
    \bibfield{author}{\bibinfo{person}{Melvyn Tyloo}, \bibinfo{person}{Laurent
      Pagnier}, {and} \bibinfo{person}{Philippe Jacquod}.}
      \bibinfo{year}{2019}\natexlab{}.
    \newblock \showarticletitle{The key player problem in complex oscillator
      networks and electric power grids: Resistance centralities identify local
      vulnerabilities}.
    \newblock \bibinfo{journal}{\emph{Science Advances}} \bibinfo{volume}{5},
      \bibinfo{number}{11}, \bibinfo{pages}{eaaw8359}.
    \newblock
    
    
    \bibitem[Verma and Zhang(2017)]%
            {verma2017hunt}
    \bibfield{author}{\bibinfo{person}{Saurabh Verma} {and} \bibinfo{person}{Zhi-Li
      Zhang}.} \bibinfo{year}{2017}\natexlab{}.
    \newblock \showarticletitle{Hunt for the unique, stable, sparse and fast
      feature learning on graphs}.
    \newblock \bibinfo{journal}{\emph{Advances in Neural Information Processing
      Systems}}  \bibinfo{volume}{30}.
    \newblock
    
    
    \bibitem[Wang et~al\mbox{.}(2017)]%
            {wang2017fora}
    \bibfield{author}{\bibinfo{person}{Sibo Wang}, \bibinfo{person}{Renchi Yang},
      \bibinfo{person}{Xiaokui Xiao}, \bibinfo{person}{Zhewei Wei}, {and}
      \bibinfo{person}{Yin Yang}.} \bibinfo{year}{2017}\natexlab{}.
    \newblock \showarticletitle{FORA: simple and effective approximate
      single-source personalized pagerank}. In
      \bibinfo{booktitle}{\emph{Proceedings of the 23rd ACM SIGKDD International
      Conference on Knowledge Discovery and Data Mining}}.
      \bibinfo{pages}{505--514}.
    \newblock
    
    
    \bibitem[Wei et~al\mbox{.}(2021)]%
            {wei2021biharmonic}
    \bibfield{author}{\bibinfo{person}{Yulong Wei}, \bibinfo{person}{Ronghua Li}, {and}
      \bibinfo{person}{Weihua Yang}.} \bibinfo{year}{2021}\natexlab{}.
    \newblock \showarticletitle{Biharmonic distance of graphs}.
    \newblock \bibinfo{journal}{\emph{arXiv preprint arXiv:2110.02656v1}}.
    \newblock

    \bibitem[Xu et~al\mbox{.}(2022)]%
        {XuWuZhZhKaCh22}
    \bibfield{author}{\bibinfo{person}{Wanyue Xu}, \bibinfo{person}{Bin Wu},
      \bibinfo{person}{Zuobai Zhang}, \bibinfo{person}{Zhongzhi Zhang},
      \bibinfo{person}{Haibin Kan}, {and} \bibinfo{person}{Guanrong Chen}.}
      \bibinfo{year}{2022}\natexlab{}.
    \newblock \showarticletitle{Coherence scaling of noisy second-order scale-free
      consensus networks}.
    \newblock \bibinfo{journal}{\emph{IEEE Transactions on Cybernetics}}
      \bibinfo{volume}{52}, \bibinfo{number}{7},
      \bibinfo{pages}{5923--5934}.
    \newblock
    
    
    \bibitem[Yang(2022)]%
            {simibipar}
    \bibfield{author}{\bibinfo{person}{Renchi Yang}.}
      \bibinfo{year}{2022}\natexlab{}.
    \newblock \showarticletitle{Efficient and effective similarity search over
      bipartite graphs}. In \bibinfo{booktitle}{\emph{Proceedings of the ACM Web
      Conference}}. \bibinfo{publisher}{ACM}, \bibinfo{pages}{308–318}.
    \newblock
    \showISBNx{9781450390965}
    
    
    \bibitem[Yang and Tang(2023)]%
            {Yang2023EfficientEO}
    \bibfield{author}{\bibinfo{person}{Renchi Yang} {and} \bibinfo{person}{Jing
      Tang}.} \bibinfo{year}{2023}\natexlab{}.
    \newblock \showarticletitle{Efficient estimation of pairwise effective
      resistance}.
    \newblock \bibinfo{journal}{\emph{Proceedings of the ACM on Management of
      Data}}  \bibinfo{volume}{1}, \bibinfo{pages}{1 -- 27}.
    \newblock
    
    
    \bibitem[Yang and Klein(2013)]%
            {yang2013recursion}
    \bibfield{author}{\bibinfo{person}{Yujun Yang} {and} \bibinfo{person}{Douglas J.
      Klein}.} \bibinfo{year}{2013}\natexlab{}.
    \newblock \showarticletitle{A recursion formula for resistance distances and
      its applications}.
    \newblock \bibinfo{journal}{\emph{Discrete Applied Mathematics}} \bibinfo{volume}{161},
      \bibinfo{number}{16-17}, \bibinfo{pages}{2702--2715}.
    \newblock
    
    
    \bibitem[Yi et~al\mbox{.}(2018a)]%
            {YiSh2018}
    \bibfield{author}{\bibinfo{person}{Yuhao Yi}, \bibinfo{person}{Liren Shan},
      \bibinfo{person}{Huan Li}, {and} \bibinfo{person}{Zhongzhi Zhang}.}
      \bibinfo{year}{2018}\natexlab{a}.
    \newblock \showarticletitle{Biharmonic distance related centrality for edges in
      weighted networks.}. In \bibinfo{booktitle}{\emph{Proceedings of the 27th
      International Joint Conference on Artificial Intelligence}}.
      \bibinfo{pages}{3620--3626}.
    \newblock
    
    
    \bibitem[Yi et~al\mbox{.}(2018b)]%
            {ACC2018BHD}
    \bibfield{author}{\bibinfo{person}{Yuhao Yi}, \bibinfo{person}{Bingjia Yang},
      \bibinfo{person}{Zhongzhi Zhang}, {and} \bibinfo{person}{Stacy Patterson}.}
      \bibinfo{year}{2018}\natexlab{b}.
    \newblock \showarticletitle{Biharmonic distance and performance of second-order
      consensus networks with stochastic disturbances}. In
      \bibinfo{booktitle}{\emph{Annual American Control Conference}}.
      \bibinfo{pages}{4943--4950}.
    \newblock
    
    
    \bibitem[Yi et~al\mbox{.}(2022)]%
            {Yi2022BiharmonicDP}
    \bibfield{author}{\bibinfo{person}{Yuhao Yi}, \bibinfo{person}{Bingjia Yang},
      \bibinfo{person}{Zuobai Zhang}, \bibinfo{person}{Zhongzhi Zhang}, {and}
      \bibinfo{person}{Stacy Patterson}.} \bibinfo{year}{2022}\natexlab{}.
    \newblock \showarticletitle{Biharmonic distance-based performance metric for
      second-order noisy consensus networks}.
    \newblock \bibinfo{journal}{\emph{IEEE Transactions on Information Theory}}
      \bibinfo{volume}{68}, \bibinfo{pages}{1220--1236}.
    \newblock
    
    
    \bibitem[Zhang et~al\mbox{.}(2019)]%
            {prunwww19}
    \bibfield{author}{\bibinfo{person}{Hongyang Zhang}, \bibinfo{person}{Huacheng
      Yu}, {and} \bibinfo{person}{Ashish Goel}.} \bibinfo{year}{2019}\natexlab{}.
    \newblock \showarticletitle{Pruning based distance sketches with provable
      guarantees on random graphs}. In \bibinfo{booktitle}{\emph{The World Wide Web
      Conference}}. \bibinfo{publisher}{Association for Computing Machinery},
      \bibinfo{pages}{2301–2311}.
    \newblock
    \showISBNx{9781450366748}
    
    
    \bibitem[Zhang et~al\mbox{.}(2020)]%
            {zhang2020fast}
    \bibfield{author}{\bibinfo{person}{Zuobai Zhang}, \bibinfo{person}{Wanyue Xu},
      \bibinfo{person}{Yuhao Yi}, {and} \bibinfo{person}{Zhongzhi Zhang}.}
      \bibinfo{year}{2020}\natexlab{}.
    \newblock \showarticletitle{Fast approximation of coherence for second-order
      noisy consensus networks}.
    \newblock \bibinfo{journal}{\emph{IEEE Transactions on Cybernetics}}
      \bibinfo{volume}{52}, \bibinfo{number}{1},
      \bibinfo{pages}{677--686}.
    \newblock
\end{thebibliography}
\end{document}